\documentclass[10pt,journal,onecolumn]{IEEEtran}


\usepackage{cite}

\usepackage[cmex10]{amsmath}
\interdisplaylinepenalty=2500

\usepackage{array}

\usepackage{fixltx2e}


\usepackage{url}



\usepackage{amssymb}
\usepackage{amsthm}

\newtheorem{thm}{Theorem}[section]
\newtheorem{cor}[thm]{Corollary}
\newtheorem{lem}[thm]{Lemma}

\newtheorem{defn}[thm]{Definition}


\newcommand{\Aut}[1]{\mathrm{Aut}({#1})}
\newcommand{\Iso}{\mathrm{Iso}}

\newcommand{\rect}{\mathrm{Rect}}

\begin{document}

\title{On the Classification of MDS Codes}

\author{Janne~I.~Kokkala,
        Denis~S.~Krotov,
        Patric~R.~J.~{\"O}sterg{\aa}rd
\thanks{
The work of J. I. Kokkala was supported in part by the Aalto ELEC Doctoral School.
The work of D. S. Krotov was supported in part by Grant 13-01-00463-a of the Russian Foundation for Basic Research.
The work of P. R. J. {\"O}sterg{\aa}rd was supported in part by the Academy of Finland under Grant No.\ 132122.
The material in this paper was presented in part at the 4th International Castle Meeting in Coding Theory and Applications, Palmela, Portugal, September 2014.}
\thanks{J. I. Kokkala and P. R. J. {\"O}sterg{\aa}rd are with the Department of Communications and Networking, Aalto University School of Electrical Engineering, 00076 Aalto, Finland}
\thanks{D. S. Krotov is with the Sobolev Institute of Mathematics and the Mechanics
and Mathematics Department, Novosibirsk State University, 630090 Novosibirsk, Russia}
}

\maketitle

\begin{abstract}
A $q$-ary code of length $n$, size $M$, and minimum distance $d$ is called an $(n,M,d)_q$ code. An $(n,q^{k},n-k+1)_q$ code is called a maximum distance separable (MDS) code. In this work, some MDS codes over small alphabets are classified. It is shown that every $(k+d-1,q^k,d)_q$ code with $k\geq 3$, $d \geq 3$, $q \in \{5,7\}$ is equivalent to a linear code with the same parameters. This implies that the $(6,5^4,3)_5$ code and the $(n,7^{n-2},3)_7$ MDS codes for $n\in\{6,7,8\}$ are unique. The classification of one-error-correcting $8$-ary MDS codes is also finished; there are $14$, $8$, $4$, and $4$ equivalence classes of $(n,8^{n-2},3)_8$ codes for $n=6,7,8,9$, respectively. One of the equivalence classes of perfect $(9,8^7,3)_8$ codes corresponds to the Hamming code and the other three are nonlinear codes for which there exists no previously known construction.
\end{abstract}


\section{Introduction}

\IEEEPARstart{A}{\emph{code}} of \emph{length} $n$ over an \emph{alphabet} $\mathcal{A}$ is a subset of $\mathcal{A}^n$. With alphabet size $q = |\mathcal{A}|$, the code is called a \emph{$q$-ary} code. The number of codewords is called the \emph{size} of the code. The \emph{Hamming distance} between two words in $\mathcal{A}^n$ is the number of coordinates in which they differ. The \emph{minimum distance} of a code is the minimum Hamming distance between any two distinct codewords. A code with minimum distance $d$ is able to detect errors in up to $d-1$ coordinates and correct errors in up to $\left\lfloor (d-1)/2\right\rfloor$ coordinates. A $q$-ary code of length $n$, size $M$, and minimum distance $d$ is called an $(n,M,d)_q$ code. 

A code with the alphabet $\mathbb{F}_q$, the finite field of order $q$, is \emph{linear} if the codewords form a vector subspace of $\mathbb{F}_q^n$. For \emph{unrestricted} (that is, either linear or nonlinear) codes, two codes are called \emph{equivalent} if one can be obtained from the other by a permutation of coordinates followed by permutations of symbols at each coordinate separately. We use the notation $C \cong C'$ to denote that codes $C$ and $C'$ are equivalent. Equivalence maintains the Hamming distance between codewords but not linearity.

A general bound for the size of an $(n,M,d)_q$ code is the Singleton bound \cite{S64}, which states that
\[
M \leq q^{n-d+1}.
\]
Codes with $M=q^{n-d+1}$ are called \emph{maximum distance separable (MDS)}.

The Hamming bound, or the sphere-packing bound, states that
\[
M \leq \frac{q^n}{\sum_{i=0}^t\binom{n}{i} (q-1)^i},
\]
where $t=\left\lfloor \frac{d-1}{2}\right\rfloor$ is the number of errors a code with minimum distance $d$ can correct. Codes attaining this bound are called \emph{perfect}. For one-error-correcting codes, $d=3$, and thus
\[
M \leq \frac{q^n}{1 + n (q-1)}.
\]

Even the existence of linear MDS codes with given parameters is in general an open question (see \cite[Chapter~11]{MS77}), and less is known about the unrestricted case. The $(n,q^2,n-1)_q$ codes correspond to sets of mutually orthogonal Latin squares, which have been widely studied \cite{ACD07}. For some results for other unrestricted MDS codes, see \cite{T94,A03,YWJ09,AH14}.

Perfect one-error-correcting MDS codes are $(q+1,q^{q-1},3)_q$ codes. For a prime power $q$, the only linear code up to equivalence with these parameters is the Hamming code, whose parity check matrix contains the maximal number $q+1$ of pairwise linearly independent columns. A natural question is whether codes with the same parameters exist that are not equivalent to linear codes. 

The $(3,2^1,3)_2$ code is trivially unique, and the uniqueness of the $(4,3^2,3)_3$ code is not difficult to prove either. Alderson \cite{A06} showed that the $(5,4^3,3)_4$ code is unique. The nonexistence of Graeco-Latin squares of order $6$ implies the nonexistence of $(7,6^5,3)_6$ codes. The cases $q=5,7,8$ are settled in the present work: the $(6,5^4,3)_5$ and $(8,7^6,3)_7$ codes are unique and there exists four equivalence classes of $(9,8^7,3)_8$ codes.

In the general case, when $q$ is a proper prime power and $q\geq 9$, there exists a $(q+1,q^{q-1},3)_q$ code that is not equivalent to the Hamming code with the same parameters, as demonstrated by an early construction by Lindstr{\"o}m \cite{L69}. Heden \cite{H08} studied certain perfect codes when $q$ is a prime and showed that they are equivalent to linear codes. There exist also constructions for nonlinear perfect codes using more restrictive notions of equivalence, such as \cite{PRV05}.

Shortening the perfect codes gives one-error-correcting $(n,q^{n-2},3)_q$ MDS codes for $3\leq n<q+1$. Our work relies on known classification results of $(n,q^{n-2},3)_q$ MDS codes for $n=4,5$. For $n=4$, the codes are equivalent to Graeco-Latin squares of order $q$, which have been classified for $q \leq 8$ by McKay \cite{M}; there are $1$, $1$, $1$, $0$, $7$, $2165$ equivalence classes of such codes for $q=3,4,\dots,8$, respectively. For $n=5$, the codes are equivalent to Graeco-Latin cubes which have been classified recently \cite{KO14}; there are $1$, $1$, and $12484$ equivalence classes of such codes for $q=5,7,8$, respectively.

This work consists of two parts. In the first part, we show that every $(k+d-1,q^k,d)_q$ code, where $k,d\geq 3$ and $q=5,7$, is equivalent to a linear code. For one-error-correcting codes, this implies that the $(6,5^4,3)_5$ code and the $(n,7^{n-2},3)_q$ codes for $n=6,7,8$ are unique. This part is easier to carry out using the terminology of Latin squares. In the second part, we present an algorithm for exhaustive generation of $(n,q^{n-2},3)_q$ codes starting from $(n-1,q^{n-3},3)_q$ codes. Running this algorithm for $q=8$ yielded $14$, $8$, $4$, and $4$ equivalence classes of $(n,8^{n-2},3)_8$ codes for $n=6,7,8,9$, respectively.

\section{Preliminaries}

For ease of notation, we denote 
$[m] = \{1,2,\dots,m\}$
when referring to sets of indices.

\subsection{Latin Hypercubes and MDS Codes}

A \emph{Latin square} of order $q$ is a $q\times q$ array of symbols from an alphabet $\mathcal{A}$ of size $q$ such that each symbol appears exactly once in each row and each column. Two Latin squares are called \emph{orthogonal} if each pair of symbols occurs exactly once when the squares are superimposed. A pair of orthogonal Latin squares is called a \emph{Graeco-Latin square}.

A \emph{Latin hypercube} of dimension $k$ is a $q \times q \times \cdots \times q$ ($k$ times) array of symbols from an alphabet $\mathcal{A}$ of size $q$ where each $q \times q$ subarray, obtained by fixing any $k-2$ coordinates, is a Latin square. Two Latin hypercubes of same dimension are called orthogonal if when the hypercubes are superimposed, every $q\times q$ subarray is a Graeco-Latin square. A pair of Latin hypercubes is called a \emph{Graeco-Latin hypercube}.

We denote the positions in a Latin hypercube of dimension $k$ by elements in $\mathcal{A}^k$, so Latin hypercubes can be viewed as functions from $\mathcal{A}^k$ to $\mathcal{A}$. For ease of notation, we assume that $\mathcal{A} = \mathbb{F}_q$ when $q$ is a prime power unless otherwise mentioned.

There is a one-to-one correspondence between Latin hypercubes of order $q$ and dimension $k$ and $(k+1,q^k,2)_q$ codes: let $c=(c_1,c_2,\dots,c_{k+1})$ be a codeword if $c_{k+1}$ occurs at position $(c_1,c_2,\dots,c_k)$ in the Latin hypercube. Similarly, there is a one-to-one correspondence between Graeco-Latin hypercubes of order $q$ and dimension $k$ and $(k+2,q^k,3)_q$ MDS codes: let $c=(c_1,c_2,\dots,c_{k+2})$ be a codeword if $(c_{k+1},c_{k+2})$ occurs at position $(c_1,c_2,\dots,c_k)$ in the Graeco-Latin hypercube.

We define linearity of Latin hypercubes and tuples of Latin hypercubes as follows. A Latin hypercube $f$ of order $q$ and dimension $k$ is linear if there are permutations $\alpha_0,\alpha_1,\dots,\alpha_k$ of $\mathbb{F}_q$ such that 
\begin{equation}
\alpha_0(f(x_1,x_2,\dots,x_k))  
 = \alpha_1(x_1) + \alpha_2(x_2) + \cdots + \alpha_k(x_k). \label{eq:def-lin}
\end{equation}
This is equivalent to the condition that the corresponding MDS code be equivalent to a linear code. An $r$\nobreakdash-tuple of (not necessarily mutually orthogonal) Latin hypercubes $(f_1,f_2,\dots,f_r)$ is linear if there are permutations $\alpha_1,\alpha_2,\dots,\alpha_k,\beta_1,\beta_2,\dots,\beta_r$ of $\mathbb{F}_q$ and coefficients $a_{i,j} \in \mathbb{F}_q$ for $i\in[r]$, $j \in [k]$ such that
\[
\beta_i(f_i(x_1,x_2,\dots,x_k)) 
= a_{i,1}\alpha_1(x_1) + a_{i,2}\alpha_2(x_2) +\cdots + a_{i,k} \alpha_k(x_k),
\]
for each $i \in [r]$.
We may assume that $a_{1,i}=1$ for all $i$.
For Graeco-Latin hypercubes, this is equivalent to the condition that the corresponding MDS code be equivalent to a linear code.

\subsection{Properties of MDS Codes}

Codes can be transformed into shorter and longer codes by operations called shortening and extending. Because these operations are used extensively in the description of the algorithm, we introduce precise notation for them here.

\begin{defn}
For an $(n,M,d)_q$ MDS code $C$, let 
\[
s(C,i,v) 
 = \{ (c_1,c_2,\dots,c_{i-1},c_{i+1},\dots,c_n) : c \in C \text{ and } c_i = v \}.
\]
This operation is called \emph{shortening}.
\end{defn}

\begin{defn}
For an $(n,M,d)_q$ MDS code $C$, let 
\[
e(C,i,v) = \{ (c_1,c_2,\dots,c_{i-1},v,c_{i},\dots,c_n) : c \in C\}.
\]
This operation is called \emph{extending}.
\end{defn}

In other words, $s(C,i,v)$ is the $(n-1,M',d')_q$ code that is obtained by removing the $i$th coordinate from $C$ and retaining the codewords that have $v$ at that coordinate, and $e(C,i,v)$ is the $(n+1,M,d)_q$ code which is obtained by adding a coordinate at $i$ with the symbol $v$ to each codeword of $C$. 

The following basic theorems are important in the construction of MDS codes based on shorter codes presented in Section~\ref{sec:computational}.

\begin{thm}
A shortened MDS code is an MDS code.
\end{thm}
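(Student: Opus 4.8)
The plan is to show that shortening an $(n, q^{n-d+1}, d)_q$ MDS code $C$ at coordinate $i$ with value $v$ produces a code with the MDS parameters $(n-1, q^{n-d}, d-1)_q$. There are two things to verify: that the size of $s(C,i,v)$ is exactly $q^{n-d}$, and that its minimum distance is at least $d-1$; together with the Singleton bound (quoted in the introduction) applied to length $n-1$ and minimum distance $d-1$, equality in size and the distance lower bound will force the shortened code to be MDS. I would state the target parameters explicitly at the start of the proof so the bookkeeping is transparent.

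First I would handle the minimum distance. Take two distinct codewords of $s(C,i,v)$; they come from two distinct codewords $c, c' \in C$, both having $v$ in coordinate $i$. Since $c$ and $c'$ agree in coordinate $i$, they must differ in at least $d-1$ of the remaining $n-1$ coordinates (otherwise their total Hamming distance would be less than $d$). Hence the two images differ in at least $d-1$ positions, so $s(C,i,v)$ has minimum distance at least $d-1$. This step is short and essentially a one-line observation.

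The main work is the size count, and this is the step I expect to be the real obstacle, since it is where the MDS hypothesis is genuinely used. The map $c \mapsto (c_1,\dots,c_{i-1},c_{i+1},\dots,c_n)$ restricted to the codewords with $c_i = v$ is injective: two codewords of $C$ agreeing on all coordinates except possibly $i$ would be at distance at most $1 < d$, hence equal. So $|s(C,i,v)|$ equals the number of codewords of $C$ with $c_i = v$. To count these, I would use the standard MDS fact that the projection of $C$ onto any $n-d+1$ coordinates is a bijection onto $\mathbb{F}_q^{n-d+1}$ (equivalently $\mathcal{A}^{n-d+1}$): the projection is surjective by a counting argument — if two codewords had the same image on some $n-d+1$ coordinates they would agree in $n-d+1$ positions, contradicting $d$, so the projection is injective, and since $|C| = q^{n-d+1} = |\mathcal{A}^{n-d+1}|$ it is a bijection. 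Choosing a set of $n-d+1$ coordinates that includes coordinate $i$, the number of codewords with $c_i = v$ is the number of preimages of the $q^{n-d}$ tuples having $v$ in that slot, namely exactly $q^{n-d}$. Combined with the distance bound and the Singleton bound for parameters $(n-1, \cdot, d-1)_q$, this shows $s(C,i,v)$ is an $(n-1, q^{n-d}, d-1)_q$ MDS code, completing the proof. One should also remark (or implicitly assume $d \geq 2$) that the statement is vacuous or trivial in degenerate ranges, but for the intended use $d \geq 3$ and this is not an issue.
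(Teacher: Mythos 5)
Your overall plan (distance bound plus exact size count via the projection bijection) is the right kind of argument, and the size count is correct: the projection of an $(n,q^{n-d+1},d)_q$ MDS code onto any $n-d+1$ coordinates is a bijection, so exactly $q^{n-d}$ codewords have $c_i=v$, and deleting coordinate $i$ is injective on them. (The paper itself states this theorem without proof, so there is no argument to compare against; the subsequent theorem records the correct shortened parameters $(n-1,q^{k-1},n-k+1)_q$.)

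However, there is a genuine error in your parameter bookkeeping, and it breaks the final step. You aim for the parameters $(n-1,q^{n-d},d-1)_q$, but these are \emph{not} MDS parameters: MDS for length $n-1$ and minimum distance $d-1$ would require size $q^{(n-1)-(d-1)+1}=q^{n-d+1}$, whereas you (correctly) show the size is only $q^{n-d}$. So ``distance at least $d-1$ plus the Singleton bound'' does not force anything -- the Singleton bound for distance $d-1$ is simply not attained, and your concluding sentence asserts an MDS code with parameters that violate the defining equality. You have in effect proved the distance bound appropriate to \emph{puncturing}, not shortening. The fix is immediate: two distinct codewords contributing to $s(C,i,v)$ agree at coordinate $i$, so \emph{all} of their at least $d$ disagreements lie among the remaining $n-1$ coordinates, and the shortened code has minimum distance at least $d$ (not merely $d-1$). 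Combined with your size count and the Singleton bound in the other direction (a code of length $n-1$ and size $q^{n-d}$ has distance at most $(n-1)-(n-d)+1=d$), the shortened code is an $(n-1,q^{n-d},d)_q$ code, which does satisfy $q^{n-d}=q^{(n-1)-d+1}$ and hence is MDS, in agreement with the parameters used later in the paper.
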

\begin{thm}
An $(n,q^{k},n-k+1)_q$ MDS code is a union of $q$ extended MDS codes: for each coordinate $i$ there are $(n-1,q^{k-1},n-k+1)_q$ MDS codes $C'_v$ for each $v\in \mathbb{F}_q$ such that
\[
C = \bigcup_{v \in \mathbb{F}_q} e(C'_v,i,v).
\]
\end{thm}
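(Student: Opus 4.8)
The plan is to fix the coordinate $i$ and partition $C$ according to the symbol appearing in that coordinate. For $v \in \mathbb{F}_q$ set $C_v = \{\, c \in C : c_i = v \,\}$, so that $C$ is the disjoint union of the sets $C_v$. The first step is to show that each class has the same size, namely $|C_v| = q^{k-1}$. For this I would invoke the information-set property of MDS codes, which here is a one-line consequence of the minimum distance: since $d = n-k+1$, any two distinct codewords of $C$ agree in at most $n - d = k-1$ coordinates, so the projection of $C$ onto any fixed set of $k$ coordinates is injective; as $|C| = |\mathbb{F}_q^k| = q^k$, it is in fact a bijection. Applying this to a set of $k$ coordinates that contains $i$ shows that exactly $q^{k-1}$ codewords have $c_i = v$, for each $v$. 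Note also that deleting coordinate $i$ from the codewords in $C_v$ introduces no collisions, since two such codewords already differ in a coordinate other than $i$; hence $|s(C,i,v)| = q^{k-1}$.

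Next I would identify the codes $C'_v$ as $C'_v = s(C,i,v)$. By the preceding theorem this is an MDS code; it has length $n-1$ and, by the previous step, size $q^{k-1}$. Its minimum distance $d'$ satisfies $d' \le (n-1)-(k-1)+1 = n-k+1$ by the Singleton bound, while $d' \ge d = n-k+1$ because any two codewords of $s(C,i,v)$ come from two codewords of $C$ that moreover agree in coordinate $i$, so they differ in exactly the same number of coordinates as their preimages. Therefore $d' = n-k+1$ and $C'_v$ is an $(n-1,q^{k-1},n-k+1)_q$ MDS code.

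Finally I would verify the decomposition. The set $e(C'_v,i,v) = e(s(C,i,v),i,v)$ is obtained from the codewords of $C$ with $c_i = v$ by first deleting coordinate $i$ and then reinserting the symbol $v$ in that position, which returns exactly $C_v$. Taking the union over $v \in \mathbb{F}_q$ yields $\bigcup_{v \in \mathbb{F}_q} e(C'_v,i,v) = \bigcup_{v \in \mathbb{F}_q} C_v = C$, as claimed. I do not anticipate a real obstacle: the only substantive point is the uniform distribution of symbols in a single coordinate, and this follows immediately from the injectivity of the $k$-coordinate projection; everything else is bookkeeping with the shortening and extending operations.
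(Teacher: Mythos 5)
Your proposal is correct and follows the same route as the paper, whose entire proof is the one-liner ``choose $C'_v = s(C,i,v)$''; you simply spell out the details (uniform symbol distribution via the $k$-coordinate projection, preservation of minimum distance, and the bookkeeping showing $e(s(C,i,v),i,v)$ recovers the codewords with $c_i=v$) that the paper leaves implicit or delegates to the preceding theorem on shortened MDS codes.
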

\begin{proof}
Simply choose $C'_v = s(C,i,v)$.
\end{proof}

\subsection{Code Equivalence} \label{sec:equivalence}

The operations maintaining equivalence of codes of length $n$ and alphabet $\mathcal{A}$ form a group $G$ that acts on $\mathcal{A}^n$. Each element $g\in G$ can be expressed in terms of a permutation $\pi$ of $[n]$ and permutations $\sigma_i$ for $i\in [n]$  of $\mathcal{A}$ as
\[
g = (\pi; \sigma_1, \sigma_2, \dots, \sigma_n),
\]
such that for each $c = (c_1,c_2,\dots,c_n) \in \mathcal{A}^n$ and for each $i\in [n]$,
\[
(gc)_{\pi(i)} = \sigma_{\pi(i)}(c_i),
\]
where $(gc)_i$ denotes the $i$th symbol of $gc$. 

Two codes, $C$ and $C'$, are thus equivalent when there exists a $g\in G$ such that $C=gC'$. The set of all elements of $G$ that map $C$ to $C'$ is denoted by $\Iso(C,C')$. An element of $\Iso(C,C)$ is called an automorphism of $C$. The group of automorphisms of $C$ is denoted by $\Aut{C}$.
For equivalent codes $C$ and $C'$, we can write
\begin{equation}
\Iso(C,C') = \Aut{C'} g, \label{eq:isoaut}
\end{equation}
where $g$ is any element of $\Iso(C,C')$.

Each word that has value $v$ at coordinate $i$ is mapped by $g$ to a word that has value $\sigma_{\pi(i)}(v)$ at coordinate $\pi(i)$. We also define an action of $G$ on $[n] \times \mathcal{A}$ by
\[
g(i,v) = (\pi(i),\sigma_{\pi(i)}(v)).
\]

When the length of the codes is not obvious from the context, we denote by $G=G_n$ the group acting on $\mathcal{A}^n$. Because the study of equivalence of shortened codes of two codes plays a crucial role in the algorithm, we need the following two definitions to ease notation.

\begin{defn}
For every $g \in G_n$ and every $i \in [n]$, define $e(g,i) \in G_{n+1}$ to be the element that applies $g$ to the subcodes obtained by removing $i$ and keeps the coordinate $i$ intact, that is,
\[
  e(gC,i,v) = e(g,i)e(C,i,v),
\]
for every $v\in\mathcal{A}$, and $C \subseteq \mathcal{A}^n$.
\end{defn}

\begin{defn}
For every $g \in G_{n}$ and every $i \in [n]$ such that $g$ maps coordinate $i$ to itself and does not permute the symbols in coordinate $i$, define $s(g,i) \in G_{n-1}$ such that it applies $g$ ignoring the coordinate $i$ to codes of length $n-1$, that is,
\[
  s(gC,i,v) = s(g,i)s(C,i,v),
\]
for each $v\in\mathcal{A}$ and $C \subseteq \mathcal{A}^n$.
\end{defn}

\subsection{Computational Tools}

To solve the problem of code equivalence computationally, we reduce it to the graph isomorphism problem. For each $q$\nobreakdash-ary code $C$ of length $n$, we define a labeled coloured graph as follows. The graph contains $n$ copies of the complete graph with $q$ vertices, colored with the first colour. For each codeword, the graph contains a vertex colored with the second color. From a vertex corresponding to codeword $c$, there is an edge to the $v$th vertex in the $i$th complete graph if and only if $c$ has a value $v$ at coordinate $i$.

Now two codes, $C$ and $C'$, are equivalent if and only if their corresponding graphs, $H$ and $H'$, respectively, are isomorphic. The permutation of coordinates corresponds to permutation of the complete graphs, and the permutations of symbols in each coordinate corresponds to permutation of vertices in each complete graph. Moreover, in a graph isomorphism mapping $H$ to $H'$, the permutation of the vertices of the first colour uniquely determines the permutation of the vertices of the second colour, so there is a direct correspondence between $\Iso(C,C')$ and the set of graph isomorphisms from $H$ to $H'$.

The software \emph{nauty} \cite{MP14} can be used to find canonical labelings of graphs, which then can be used to find a graph isomorphism between isomorphic graphs. In addition, \emph{nauty} returns the automorphism of a graph. Along with \eqref{eq:isoaut}, this allows finding the set $\Iso(C,C')$ for two codes $C$ and $C'$. We use \emph{nauty} in the sparse mode with the random Schreier method enabled.

\section{Theoretical Results}

In this section, we show that an $r$-tuple of Latin hypercubes of prime order and dimension $k$, where $r\geq 2$ and $k\geq 3$, is linear if each pair of Latin hypercubes of dimension $3$ obtained by fixing $k-3$ coordinates from two hypercubes of the tuple is linear. We start by showing that every Latin hypercube of prime order and dimension $k$, where $k\geq 4$, is linear if every Latin hypercube obtained from it by fixing one coordinate is linear.

\begin{defn} A \emph{rectangle} of directions $i$ and $j$ ($i \neq j$) is a quadruple $(a=(a_1,a_2,\dots,a_k),b=(b_1,b_2,\dots,b_k),c=(c_1,c_2,\dots,c_k),d=(d_1,d_2,\dots,d_k))$ of elements of $\mathbb{F}_q^k$ such that $a_i=b_i$, $c_i=d_i$, $b_j = c_j$, and $d_j=a_j$ and $a_l=b_l=c_l=d_l$ for all $l \in [k]\setminus \{i,j\}$.
\end{defn}

\begin{lem}
For every linear Latin hypercube $f$ of prime order $q$ there is a unique function $\rect_f : \mathbb{F}_q^3 \to \mathbb{F}_q$ such that for every rectangle $(a,b,c,d)$,
\[
f(a) = \rect_f(f(b),f(c),f(d)).
\]
\end{lem}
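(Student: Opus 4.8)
The plan is to use the explicit linear representation of $f$ to write down a candidate for $\rect_f$ and then to verify that it works on every rectangle and is forced by the rectangle condition. Since $f$ is linear, fix permutations $\alpha_0,\alpha_1,\dots,\alpha_k$ of $\mathbb{F}_q$ with $\alpha_0(f(x_1,\dots,x_k)) = \alpha_1(x_1)+\cdots+\alpha_k(x_k)$, as in \eqref{eq:def-lin}. For a rectangle $(a,b,c,d)$ of directions $i$ and $j$, the coordinates in positions $l\neq i,j$ are equal across all four points, so those terms cancel when we take differences; only the $i$th and $j$th terms matter. Writing $A=\alpha_0(f(a))$, $B=\alpha_0(f(b))$, $C=\alpha_0(f(c))$, $D=\alpha_0(f(d))$, the equalities $a_i=b_i$, $c_i=d_i$, $b_j=c_j$, $d_j=a_j$ give, after cancellation, $A-B = (\alpha_j(a_j)-\alpha_j(b_j))$, $D-C = (\alpha_j(d_j)-\alpha_j(c_j)) = (\alpha_j(a_j)-\alpha_j(b_j))$ as well (using $d_j=a_j$, $c_j=b_j$), hence $A - B = D - C$, i.e. $A = B - C + D$. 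Therefore $f(a) = \alpha_0^{-1}\bigl(\alpha_0(f(b)) - \alpha_0(f(c)) + \alpha_0(f(d))\bigr)$, which suggests defining
\[
\rect_f(u,v,w) = \alpha_0^{-1}\bigl(\alpha_0(u) - \alpha_0(v) + \alpha_0(w)\bigr).
\]

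Next I would check that this $\rect_f$ indeed has the required property for every rectangle: this is exactly the computation just sketched, run in reverse, and it only uses that the four points of a rectangle agree outside coordinates $i,j$ and satisfy the four pairwise equalities, together with the linear form of $f$; no primality is needed here, only that $\mathbb{F}_q$ is an abelian group under addition. One subtlety to address is that the formula for $\rect_f$ might a priori depend on the chosen representation $(\alpha_0,\dots,\alpha_k)$, but uniqueness (below) will show it does not.

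For uniqueness, I would argue that the values $\rect_f(u,v,w)$ are pinned down for all $(u,v,w)\in\mathbb{F}_q^3$ by the rectangle condition, because for any triple $(u,v,w)$ of field elements there is a rectangle $(a,b,c,d)$ with $f(b)=u$, $f(c)=v$, $f(d)=w$ — and then $\rect_f(u,v,w)$ is forced to equal $f(a)$. To produce such a rectangle, fix two directions $i\neq j$ (possible since $k\geq 2$; in fact the lemma is used for $k\geq 3$), freeze all coordinates $l\neq i,j$ to arbitrary fixed values, and use that the $q\times q$ subarray of $f$ obtained by fixing those $k-2$ coordinates is a Latin square: in a Latin square, given the symbol $u$ at cell $(r_1,s_1)$, one can find cells $(r_1,s_2)$ with symbol $w$ (vary the column, the Latin property of the row gives every symbol exactly once) and $(r_2,s_1)$ with symbol $v$ (vary the row). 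Setting $b=(r_1,s_1)$, $d=(r_1,s_2)$, $c=(r_2,s_1)$ (in the $i,j$ coordinates, with the frozen values elsewhere) and letting $a=(r_2,s_2)$ completes a rectangle, and its fourth symbol $f(a)$ is whatever the Latin square dictates. Thus every value of $\rect_f$ is determined, giving uniqueness. I do not expect a serious obstacle here; the only thing to be careful about is the bookkeeping of which coordinates are equal in a rectangle and matching the indices in the Latin-square argument correctly — the role of primality of $q$ is not actually needed for this particular lemma (it will enter later, when one wants $\rect_f$ to have a specific algebraic shape), so I would not invoke it in the proof.
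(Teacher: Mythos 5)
Your proposal is correct and, for the existence half, coincides with the paper's proof: the paper's entire argument is the single formula $f(a) = \alpha_0^{-1}\bigl(\alpha_0(f(b)) - \alpha_0(f(c)) + \alpha_0(f(d))\bigr)$, which is exactly your candidate $\rect_f(u,v,w)=\alpha_0^{-1}\bigl(\alpha_0(u)-\alpha_0(v)+\alpha_0(w)\bigr)$. What you add beyond the paper is the explicit uniqueness argument — every triple $(u,v,w)$ occurs as $(f(b),f(c),f(d))$ for some rectangle, via the Latin-square property of a $q\times q$ subarray — which the paper leaves implicit; that is a useful completion rather than a different route, and you are right that primality plays no role here. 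One bookkeeping slip in that part: with the definition ($a_i=b_i$, $c_i=d_i$, $b_j=c_j$, $d_j=a_j$), the cell $d$ must agree with $c$ in coordinate $i$, not with $b$; as you have labelled them ($b=(r_1,s_1)$, $c=(r_2,s_1)$, $d=(r_1,s_2)$, $a=(r_2,s_2)$), the quadruple satisfying the definition is $(a,c,b,d)$ rather than $(a,b,c,d)$, so your construction pins down $\rect_f(v,u,w)$ rather than $\rect_f(u,v,w)$. This is harmless since $(u,v,w)$ is arbitrary — either relabel, or choose $d$ with symbol $w$ in the same row as $c$ and let $a$ close the rectangle in $b$'s row — but the indices should be fixed in a final write-up.
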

\begin{proof}
Using the notation in \eqref{eq:def-lin}, we find that 
\[
f(a) = \alpha_0^{-1}(\alpha_0(f(b)) - \alpha_0(f(c)) + \alpha_0(f(d))).
\]
\end{proof}

\begin{lem} \label{lem:uniqrecon}
A linear Latin hypercube $f$ of order $q$ can be uniquely reconstructed from the function $\rect_f$ and the values $f(x_1,x_2,\dots,x_k)$ where at most one of $x_i$ is nonzero.
\end{lem}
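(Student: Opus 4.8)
The plan is to reconstruct $f$ coordinate by coordinate, using the rectangles to "move" a point toward the set of points with at most one nonzero coordinate, where the value of $f$ is given. Concretely, I would argue by induction on the number of nonzero coordinates of the argument $x = (x_1,\dots,x_k)$. The base case — at most one nonzero coordinate — is exactly the supplied data. For the inductive step, suppose $x$ has $m \geq 2$ nonzero coordinates, say $x_i \neq 0$ and $x_j \neq 0$ with $i \neq j$. I would form the rectangle $(a,b,c,d)$ in directions $i,j$ with $a = x$, so that $b = (x_1,\dots, x_i, \dots, 0_j, \dots, x_k)$ (the $j$th entry zeroed), $d = (x_1,\dots, 0_i, \dots, x_j, \dots, x_k)$ (the $i$th entry zeroed), and $c = (x_1,\dots,0_i,\dots,0_j,\dots,x_k)$ (both zeroed); all other coordinates agree with $x$. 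Then Lemma above (the one producing $\rect_f$) gives $f(a) = \rect_f(f(b),f(c),f(d))$, and each of $b,c,d$ has strictly fewer nonzero coordinates than $x$, so their $f$-values are determined by induction. Hence $f(x)$ is determined.

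The one point needing care is that the rectangle I described really is a rectangle in the sense of the definition: I must check $a_i = b_i$, $c_i = d_i$, $b_j = c_j$, $d_j = a_j$, and agreement in all other coordinates — these hold by construction since $b$ and $a$ share the $i$th coordinate $x_i$, $c$ and $d$ share the $i$th coordinate $0$, $b$ and $c$ share the $j$th coordinate $0$, and $d$ and $a$ share the $j$th coordinate $x_j$, with every coordinate outside $\{i,j\}$ equal to $x_\ell$ throughout. So the reduction is valid and the induction goes through cleanly.

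I do not expect a serious obstacle here; the statement is essentially a telescoping argument. The only mild subtlety is purely expository: one should note that the recursion is well-defined because $\rect_f$ is a single fixed function (guaranteed unique by the preceding lemma), so the reconstruction does not depend on which pair $(i,j)$ of nonzero coordinates is chosen at each step — any choice yields the same final value, precisely because $f$ itself is a well-defined function and $\rect_f$ computes $f(a)$ correctly from $f(b),f(c),f(d)$ along every rectangle. Uniqueness of the reconstruction is then immediate: any $g$ agreeing with $f$ on the base set and satisfying the same rectangle identity must agree with $f$ everywhere, by the same induction.
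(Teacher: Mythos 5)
Your proof is correct and follows essentially the same route as the paper: induction on the number $m$ of nonzero coordinates, using the rectangle identity $f(a)=\rect_f(f(b),f(c),f(d))$ with the $i$th and/or $j$th coordinates zeroed to reduce to arguments with fewer nonzero entries. The verification that the quadruple is a rectangle and the remark on well-definedness are fine but not needed beyond what the paper's one-line induction already contains.
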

\begin{proof}
When $x$ has $m\geq 2$ nonzero elements, the value $f(x)$ can be uniquely determined from the function $\rect_f$ and the values $f(x')$ where $x'$ has $m-1$ nonzero elements using
\begin{multline*} 
f(x_1,\dots,x_i,\dots,x_j,\dots,x_k) 
 \\ 
= \rect_f(
f(x_1,\dots,0,\dots,x_j,\dots,x_k), 
 f(x_1,\dots,0,\dots,0,\dots,x_k),
 f(x_1,\dots,x_i,\dots,0,\dots,x_k)).
\end{multline*}
The lemma follows by induction on $m$.
\end{proof}

\begin{lem} \label{lem:hyperlin}
Let $f$ be a hypercube of dimension $k$, where $k\geq 4$, such that each $(k-1)$-dimensional Latin hypercube obtained from $f$ by fixing one argument is linear. Then $f$ is linear.
\end{lem}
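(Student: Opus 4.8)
The plan is to reconstruct a linear representation of $f$ directly from the linear representations of the $(k-1)$-dimensional slices, by checking that these representations agree on overlaps. For each value $x_k = t \in \mathbb{F}_q$, let $g_t(x_1,\dots,x_{k-1}) = f(x_1,\dots,x_{k-1},t)$; by hypothesis each $g_t$ is linear, so there are permutations $\alpha_0^{(t)}, \alpha_1^{(t)}, \dots, \alpha_{k-1}^{(t)}$ of $\mathbb{F}_q$ with $\alpha_0^{(t)}(g_t(x_1,\dots,x_{k-1})) = \sum_{j=1}^{k-1} \alpha_j^{(t)}(x_j)$. Similarly, fixing another coordinate, say $x_1$, gives linear slices in the directions $2,\dots,k$. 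The first task is to show that, after suitable normalisation, the permutations attached to a shared direction can be taken to coincide across the slices.

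The key step uses Lemma 3.3 (existence of $\rect_f$ for linear hypercubes) applied slicewise. Since each slice $g_t$ is linear, it has a well-defined reconstruction function $\rect_{g_t}: \mathbb{F}_q^3 \to \mathbb{F}_q$ for rectangles lying inside that slice. A rectangle of directions $i,j$ with $i,j \in [k-1]$ lives entirely in some slice $x_k = t$, and also — if its fixed coordinates other than the $k$th happen to take appropriate values — inside slices obtained by fixing a different coordinate. Comparing the two reconstructions shows that $\rect_{g_t}$ is independent of $t$, i.e. there is a single function $R = \rect_f$ valid for every rectangle of $f$ whose two free directions both lie in $[k-1]$; running the same argument with $x_1$ (or any other coordinate) fixed extends this to all rectangles, so $f$ has a global reconstruction function $\rect_f$ as in Lemma 3.3. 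Because $q$ is prime, a function $R$ that is ``rectangle-consistent'' in this way is forced to have the shape $R(u,v,w) = \alpha_0^{-1}(\alpha_0(u) - \alpha_0(v) + \alpha_0(w))$ for a single permutation $\alpha_0$: this follows by choosing any one slice, where the linear representation already exhibits $\rect_{g_t}$ in exactly this form, and noting the representation of $\rect_f$ is unique by Lemma 3.3.

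With $\rect_f$ in hand, I invoke Lemma 3.4: $f$ is determined by $\rect_f$ together with the values $f(x_1,\dots,x_k)$ in which at most one $x_i$ is nonzero. Each such ``axis'' value lies in a $(k-1)$-dimensional slice (fix any coordinate that is zero — possible since $k \geq 4 > 2$, so at least two coordinates are zero), and within that slice it is given by the linear formula for that slice. Define $\alpha_j$ for $j \in [k]$ by reading off, from an appropriate slice, the permutation $\alpha_j^{(t)}$ attached to direction $j$; the cross-checks above guarantee these are well defined independent of which slice is used, up to an additive constant which can be absorbed. Then the linear hypercube $h(x_1,\dots,x_k) = \alpha_0^{-1}(\sum_{j=1}^k \alpha_j(x_j))$ has $\rect_h = \rect_f$ and agrees with $f$ on all axis points, so by Lemma 3.4, $h = f$; hence $f$ is linear.

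The main obstacle is the bookkeeping in the overlap argument: showing that the slice representations, which are a priori only unique up to composing the $\alpha_j^{(t)}$ with additive constants summing to zero and up to a common affine adjustment of $\alpha_0^{(t)}$, can be simultaneously normalised so that a shared direction gets the same permutation in every slice containing it. The fact that $k \geq 4$ is what makes this possible — it guarantees that any two slices in ``complementary'' families still share at least one free direction and at least one fixed zero coordinate, giving enough rectangles to pin down the consistency — and the primality of $q$ is what rules out exotic, non-affine solutions to the rectangle-consistency equations.
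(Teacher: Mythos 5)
Your plan is correct and follows essentially the same route as the paper: each slice is linear, the slices' $\rect$ functions are identified with one another through the $(k-2)$-dimensional subarrays they share (this is exactly where $k\geq 4$ is needed), and $f$ is then recovered from the common $\rect$ function together with the axis values via the reconstruction of Lemma~\ref{lem:uniqrecon}. The paper's only real difference is that it first normalizes, without loss of generality, the slice $x_k=0$ to be $x_1+\cdots+x_{k-1}$ and the $k$th axis values to be $x_k$, which makes the consistency-of-representations bookkeeping you flag at the end disappear.
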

\begin{proof}
For $j \in [k]$, let $r_j$ be the linear hypercube of dimension $k-1$ obtained from $f$ by letting the $j$th argument be $0$.
Without loss of generality, we may assume that 
\[
r_n(x_1,x_2,\dots,x_{k-1}) = x_1 + x_2 + \cdots + x_{k-1},
\]
and that 
\[
f(0,0,\dots,0,x_k) = x_k.
\]
Now
\[
\rect_{r_k} (a,b,c) = a-b+c.
\]
For $j\in [k-1]$, let $s_j$ be the linear hypercube of dimension $k-2$ obtained by letting the $j$th and the $k$th argument of $f$ be $0$. Because $s_j$ occurs as a subarray in both $r_j$ and $r_k$, we get
\[
\rect_{r_j} = \rect_{s_j} = \rect_{r_k}.
\]
Because 
\[
r_j(0,0,\dots,0,x_i,0,\dots,0) = x_i,
\]
where $i \in [k-1]$ and $x_i$ occurs in the $i$th position, Lemma~\ref{lem:uniqrecon} implies that 
\[
r_j(x_1,x_2,\dots,x_{k-1}) = x_1 + x_2 + \cdots + x_{k-1},
\]
for each $j \in [k]$, or equivalently,
\begin{equation}
f(x_1,x_2,\dots,x_k) = x_1+x_2+\cdots + x_k, \label{eq:hcubezero}
\end{equation}
when $x_i = 0$ for at least one value of $i$.

For each $a \in \mathbb{F}_q$, let $t_a$ be the Latin hypercube obtained from $f$ by letting the last argument be $a$. Now
\[
t_a(0,0,\dots,0,x_i,0,\dots,0) = x_i + a,
\]
where $x_i$ occurs in the $i$th position. The function $\rect_{t_a}$ is determined by \eqref{eq:hcubezero}, and again by Lemma~\ref{lem:uniqrecon}, we get that 
\[
t_a(x_1,x_2,\dots,x_{k-1}) = x_1 + x_2 + \cdots + x_{k-1} + a,
\]
for all $a$, or equivalently
\[
f(x_1,x_2,\dots,x_k) = x_1 + x_2 + \cdots + x_k.
\]
Thus, $f$ is linear.
\end{proof}

We need one more lemma before proving the main theorem.
\begin{lem} \label{lem:affine}
Let $q$ be a prime, let $c \in \mathbb{F}_q$, let $a_1,a_2,a_3 \in \mathbb{F}_q \setminus \{0\}$, and let $\gamma_1$, $\gamma_2$, and $\gamma_3$ be permutations of\/ $\mathbb{F}_q$. If 
\[
\gamma_1(x_1) + \gamma_2(x_2) + \gamma_3(x_3) = c
\]
 whenever
\[
a_1 x_1 + a_2 x_2 + a_3 x_3 = 0,
\]
then $\gamma_i$ is an affine transformation of $\mathbb{F}_q$, for all $i$.
\end{lem}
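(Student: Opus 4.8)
The plan is to reduce the functional equation to a statement about each $\gamma_i$ individually by isolating one variable at a time. Fix $x_3 = 0$ and let $x_1, x_2$ range over all pairs with $a_1 x_1 + a_2 x_2 = 0$, i.e.\ $x_2 = -a_2^{-1} a_1 x_1$. This gives $\gamma_1(x_1) + \gamma_2(-a_2^{-1}a_1 x_1) = c - \gamma_3(0)$ for all $x_1 \in \mathbb{F}_q$, so $\gamma_2$ is, up to composition with the linear map $x \mapsto -a_2^{-1}a_1 x$ and an additive constant, the negative of $\gamma_1$; symmetrically, fixing each coordinate to $0$ in turn relates all three permutations to one another by affine substitutions. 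Hence it suffices to show that $\gamma_1$ itself is affine, since then the relations propagate affineness to $\gamma_2$ and $\gamma_3$.

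To show $\gamma_1$ is affine, I would exploit that we have three essentially independent ways of writing $\gamma_1(x_1)$ in terms of the others. Concretely, for a general solution triple $(x_1,x_2,x_3)$ of $a_1x_1+a_2x_2+a_3x_3=0$ we have $\gamma_1(x_1) = c - \gamma_2(x_2) - \gamma_3(x_3)$, and both $\gamma_2$ and $\gamma_3$ have been pinned down (via the previous paragraph) as affine images of $\gamma_1$ evaluated at other points. Substituting, one obtains an identity of the form $\gamma_1(x_1) + \gamma_1(\lambda x_1 + \mu x_3) + \gamma_1(\nu x_1 + \rho x_3) = \text{const}$ valid for all $x_1, x_3 \in \mathbb{F}_q$, where $\lambda,\mu,\nu,\rho$ are explicit nonzero constants coming from $a_1,a_2,a_3$. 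The key is that this is a genuinely two-parameter family of relations on the single function $\gamma_1$. Taking a finite difference in $x_3$ (replace $x_3$ by $x_3 + h$ and subtract) kills the term $\gamma_1(x_1)$ that does not involve $x_3$ and yields a Cauchy-type relation among values of $\gamma_1$ at shifted arguments; iterating differences should force $\gamma_1$ to satisfy a second-order linear recurrence with constant coefficients, and over $\mathbb{F}_q$ with $q$ prime the only permutations satisfying such a relation for all arguments are the affine ones.

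Alternatively — and this is probably cleaner — I would argue directly that $\Gamma(x) := \gamma_1(x) - \gamma_1(0)$ is additive. From the relations above, one can express the condition ``$(x_1,x_2,x_3)$ and $(x_1',x_2',x_3')$ are both solutions'' and combine two solution triples that share a coordinate to deduce $\Gamma(u+v) = \Gamma(u) + \Gamma(v)$ for all $u,v$ in a suitable translate; since $q$ is prime, $\mathbb{F}_q$ has no proper additive subgroups, so an additive map on $\mathbb{F}_q$ is automatically $\mathbb{F}_p$-linear, hence of the form $x \mapsto mx$, and a permutation of this form has $m \neq 0$. That makes $\gamma_1$ affine, and by the symmetric relations so are $\gamma_2$ and $\gamma_3$.

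The main obstacle I anticipate is the middle step: carefully choosing which coordinates to fix and which pairs of solution triples to combine so that the cancellations actually produce a clean additivity (or constant-coefficient recurrence) statement, rather than a tangle of compositions of the three unknown permutations. Primality of $q$ enters precisely here — it guarantees that every additive self-map of $\mathbb{F}_q$ is linear and that the linear substitutions $x \mapsto a_i^{-1}a_j x$ are bijections with no fixed-point-subgroup obstructions — so the argument genuinely uses the hypothesis and would fail for general prime powers.
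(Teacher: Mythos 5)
Your proposal is correct and follows essentially the paper's own route: both arguments compare solutions of $a_1x_1+a_2x_2+a_3x_3=0$ that share a coordinate so as to express differences of $\gamma_1$ through values of the other permutations, and both use primality of $q$ only through the fact that $1$ additively generates $\mathbb{F}_q$ (equivalently, that an additive self-map of a prime field is linear). The paper is just slightly more economical --- it shows directly that $\gamma_1(x+1)-\gamma_1(x)$ equals the constant $\gamma_3(0)-\gamma_3(-a_3^{-1}a_1)$ and concludes affinity at once --- whereas your ``cleaner alternative'' derives the full Cauchy-type additivity of $\gamma_1-\gamma_1(0)$ before concluding; the two are interchangeable here, and your sketch completes without difficulty.
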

\begin{proof}
For all $x \in \mathbb{F}_q$, we find that
\[
\gamma_1(x+1 ) - \gamma_1(x) 
=\left[c - \gamma_2(-a_2^{-1}a_1 x) - \gamma_3(-a_3^{-1} a_1) \right]  
-\left[c - \gamma_2(-a_2^{-1}a_1 x) - \gamma_3(0) \right]
=\gamma_3(0) - \gamma_3(-a_2^{-1} a_1).
\]
Because $1$ generates the additive group of $\mathbb{F}_q$, we get
\[
\gamma_1(x) = [\gamma_3(0) - \gamma_3(-a_3^{-1} a_1)] x + \gamma_1(0),
\]
for each $x \in \mathbb{F}_q$. Thus, $\gamma_1$ is an affine transformation. By symmetry, so are $\gamma_2$ and $\gamma_3$.
\end{proof}

\begin{thm} \label{thm:linearlatin}
Let $(f_1,f_2,\dots,f_r)$ be an $r$-tuple of Latin hypercubes of prime order $q$ and dimension $k$, with $r\geq 2$ and $k\geq 4$, such that each pair of Latin cubes obtained from any pair of them by fixing the same $k-3$ arguments is linear. Then $(f_1,f_2,\dots,f_r)$ is a linear $r$-tuple of Latin hypercubes. 
\end{thm}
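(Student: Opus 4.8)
The plan is to argue in two stages: first, that each $f_i$ is by itself a linear Latin hypercube, and second, that the individual linear representations can be aligned into one common representation, as required by the definition of a linear $r$-tuple. For the first stage, fix any $i$ and, using $r\geq 2$, any $j\neq i$. By hypothesis, fixing the same $k-3$ arguments in $f_i$ and $f_j$ produces a linear pair of $3$-dimensional cubes, and a linear pair plainly consists of hypercubes that are linear on their own (the coefficients in the definition are forced to be nonzero, since a Latin hypercube depends on each of its coordinates). Hence every $3$-dimensional cube obtained from $f_i$ by fixing $k-3$ arguments is linear, and then an induction on dimension using Lemma~\ref{lem:hyperlin} finishes the stage: a cube of dimension $m$ with $4\leq m\leq k$ obtained from $f_i$ has all its $(m-1)$-dimensional one-argument restrictions linear by induction, hence is itself linear, and taking $m=k$ shows $f_i$ is linear. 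So for each $i$ I may fix a representation $\beta_i(f_i(x_1,\dots,x_k))=\gamma^{(i)}_1(x_1)+\cdots+\gamma^{(i)}_k(x_k)$ with $\beta_i,\gamma^{(i)}_1,\dots,\gamma^{(i)}_k$ permutations of $\mathbb{F}_q$; I designate $f_1$ as the reference and set $\alpha_j:=\gamma^{(1)}_j$.

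The heart of the matter is then the claim that $\gamma^{(i)}_j\circ\alpha_j^{-1}$ is an affine transformation of $\mathbb{F}_q$ for every $i$ and every $j\in[k]$ (it is trivial for $i=1$). Granting it, write $\gamma^{(i)}_j=a_{i,j}\alpha_j+d_{i,j}$ with $a_{i,j}\neq 0$, $a_{1,j}=1$, $d_{1,j}=0$, substitute into the representation of $f_i$ to get $\beta_i(f_i(x))=\sum_j a_{i,j}\alpha_j(x_j)+\sum_j d_{i,j}$, and absorb the constant $\sum_j d_{i,j}$ into $\beta_i$; the resulting permutations $\alpha_1,\dots,\alpha_k,\beta'_1,\dots,\beta'_r$ together with the coefficients $a_{i,j}$, for which $a_{1,j}=1$, are exactly what the definition of a linear $r$-tuple requires. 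So everything reduces to the claim.

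For the claim I would first record a rigidity property deduced from Lemma~\ref{lem:affine}: if a $3$-dimensional Latin cube of prime order admits two linear representations, with permutation families $(\mu_1,\mu_2,\mu_3)$ and $(\nu_1,\nu_2,\nu_3)$ (coefficients, if present, absorbed into the families, since a nonzero multiple of a permutation is a permutation), then each $\nu_\ell\circ\mu_\ell^{-1}$ is affine. Indeed, restricting to the level set on which the cube equals $\beta^{-1}(0)$ for the first representation, and substituting $y_\ell=\mu_\ell(x_\ell)$, turns the hypothesis into the assertion that $\sum_\ell(\nu_\ell\circ\mu_\ell^{-1})(y_\ell)$ is constant on $\{y_1+y_2+y_3=0\}$, to which Lemma~\ref{lem:affine} with $a_1=a_2=a_3=1$ applies. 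Now fix $i\neq 1$ and $j$, pick two further coordinates $j_1,j_2$ (possible since $k\geq 4$), and let $g_1,g_i$ be the $3$-dimensional cubes obtained from $f_1,f_i$ by fixing every remaining coordinate to $0$. Here $g_1$ inherits from $f_1$ a representation with family $(\alpha_j,\alpha_{j_1},\alpha_{j_2})$, and $g_i$ inherits from $f_i$ one with family $(\gamma^{(i)}_j,\gamma^{(i)}_{j_1},\gamma^{(i)}_{j_2})$; by hypothesis $(g_1,g_i)$ is a linear pair, so both also have representations sharing one common family $(\tilde\alpha_j,\tilde\alpha_{j_1},\tilde\alpha_{j_2})$. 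Applying rigidity to the two representations of $g_1$ shows each $\tilde\alpha_\ell$ is an affine function of $\alpha_\ell$; substituting this into the common representation of $g_i$ exhibits a linear representation of $g_i$ whose $\ell$th permutation is an affine function of $\alpha_\ell$; applying rigidity once more, to this representation and the inherited one of $g_i$, shows that $\gamma^{(i)}_\ell$ — in particular $\gamma^{(i)}_j$ — is an affine function of $\alpha_j$, which is the claim.

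The routine work is the bookkeeping of additive constants and of the coefficients $a_{i,j}$. The step I expect to be the main obstacle is the alignment in the second stage: a priori, different $f_i$ might demand incompatible permutation families. It succeeds because every comparison is made against the single fixed reference $f_1$, so only the pairwise hypotheses involving $f_1$ are ever invoked, and the rigidity property guarantees that the only freedom left in each representation is an affine change of variables, which composes transparently; no genuine three-way compatibility is needed.
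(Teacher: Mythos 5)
Your proof is correct and follows essentially the same route as the paper: first each $f_i$ is shown linear via Lemma~\ref{lem:hyperlin} by induction, then the three-coordinate restrictions of $(f_1,f_i)$ are compared with the common representation supplied by the pair hypothesis, and Lemma~\ref{lem:affine} forces all the coordinate permutations to be affine, which yields the common family. Your ``rigidity'' packaging (two applications of Lemma~\ref{lem:affine}, no WLOG normalization of $f_1$ to $x_1+\cdots+x_k$) is only an organizational variant of the paper's argument, not a different method.
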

\begin{proof}
By induction and Lemma~\ref{lem:hyperlin}, $f_i$ is a linear Latin hypercube for each $i$. Without loss of generality, we may assume that
\[
f_i(x_1,x_2,\dots,x_k) 
= \gamma_{i,1}(x_1) + \gamma_{i,2}(x_2) + \cdots + \gamma_{i,k}(x_k),
\]
for each $i\in [r]$, where $\gamma_{i,j}$ are permutations of $\mathbb{F}_q$ and $\gamma_{1,j}$ is the identity for each $j\in[k]$.

Consider some $i\in[r]$ and distinct $j_1,j_2,j_3 \in [k]$. Letting all arguments except $j_1$, $j_2$, $j_3$ of $f_1$ and $f_i$ be $0$, we obtain a linear pair $(g,h)$ of Latin hypercubes of dimension $3$ for which
\begin{IEEEeqnarray*}{rCl} 
\beta_0(g(x_1,x_2,x_3)) &=& \beta_0(x_1 + x_2 + x_3) 
=\alpha_1(x_1) + \alpha_2(x_2) + \alpha_3(x_3), \\
\beta_1(h(x_1,x_2,x_3)) &=& \beta_1(\gamma_{i,j_1}(x_1) + \gamma_{i,j_2}(x_2) + \gamma_{i,j_3}(x_3)) 
=a_1\alpha_1(x_1) + a_2\alpha_2(x_2) + a_3\alpha_3(x_3),
\end{IEEEeqnarray*}
for some $a_1,a_2,a_3\in\mathbb{F}_q$ and permutations $\beta_0,\beta_1,\alpha_1,\alpha_2,\alpha_3$ of $\mathbb{F}_q$.

Because $\beta_0(g(x_1,x_2,x_3))=\beta_0(0)$ whenever $x_1+x_2+x_3 = 0$, we see by Lemma~\ref{lem:affine} that $\alpha_1$, $\alpha_2$, and $\alpha_3$ are affine transformations. Similarly, $h(x_1,x_2,x_3)$ is a function of $b_1x_1+b_2x_2+b_3x_3$ for some $b_1,b_2,b_3 \in \mathbb{F}_q$, and thus $\gamma_{i,j_l}$ is an affine transformation for each $l\in\{1,2,3\}$. 

Therefore, $\gamma_{i,j}$ is an affine transformation for all $i\in[r]$ and $j\in[k]$. Thus, $(f_1,f_2,\dots,f_r)$ is a linear $r$-tuple of Latin hypercubes.
\end{proof}

Using the known computational results for Graeco-Latin cubes of orders $5$ and $7$, Theorem~\ref{thm:linearlatin} implies the following.

\begin{thm} \label{thm:linearcode}
Every code with parameters $(k+d-1,7^k,d)_7$ or $(k+d-1,5^k,d)_5$, where $k,d\geq 3$, is equivalent to a linear code.
\end{thm}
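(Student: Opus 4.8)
The plan is to reduce the statement to Theorem~\ref{thm:linearlatin} via the correspondence between MDS codes and Graeco-Latin hypercubes described in the Preliminaries, together with the known classifications of Graeco-Latin cubes of orders $5$ and $7$. First I would reduce to the case $d=3$ by shortening: if $d\geq 4$, then an $(k+d-1,q^k,d)_q$ code, shortened on any $d-3$ coordinates (by fixing arbitrary symbols there), becomes an $(k+2,q^k,3)_q$ code, and conversely extending re-glues copies along those coordinates; linearity is preserved under both operations (extending a linear code by a fixed symbol keeps it linear, and a code is linear iff all of its shortenings are — this is exactly the content used implicitly in the Latin-hypercube formulation). So it suffices to show every $(k+2,q^k,3)_q$ code with $k\geq 3$ and $q\in\{5,7\}$ is equivalent to a linear code. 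Actually the cleanest route is to phrase everything in terms of hypercubes: a $(k+2,q^k,3)_q$ MDS code corresponds to a Graeco-Latin hypercube of dimension $k$, i.e.\ an ordered pair $(f_1,f_2)$ of orthogonal Latin hypercubes, and the code is equivalent to a linear code iff the pair $(f_1,f_2)$ is linear in the sense defined after \eqref{eq:def-lin}.

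Next I would dispatch the base case $k=3$ directly from the cited computational classification: there is exactly one equivalence class of Graeco-Latin cubes of order $5$ and exactly one of order $7$ (by \cite{KO14}), and the linear pair $(x_1+x_2+x_3,\ x_1+2x_2+3x_3)$ — or any pair coming from a linear $(5,q^3,3)_q$ code, which exists since $q$ is prime — is one representative, so that class is the unique class and every such code is equivalent to a linear one. For $k\geq 4$, Theorem~\ref{thm:linearlatin} applies to the pair $r=2$: a pair $(f_1,f_2)$ of Latin hypercubes of prime order $q\in\{5,7\}$ and dimension $k\geq 4$ is linear provided every pair of $3$-dimensional Latin cubes obtained by fixing the same $k-3$ arguments is linear. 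But fixing $k-3$ coordinates in a Graeco-Latin hypercube yields a Graeco-Latin cube of the same order $q$, i.e.\ a pair of orthogonal Latin cubes, and by the $k=3$ case (the uniqueness result just invoked) every such pair is linear. Hence the hypothesis of Theorem~\ref{thm:linearlatin} is met, so $(f_1,f_2)$ is a linear pair of Latin hypercubes, and therefore the original code is equivalent to a linear code.

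The one point that needs a little care — and the only plausible obstacle — is confirming that ``fixing $k-3$ arguments of a Graeco-Latin hypercube gives a Graeco-Latin cube,'' so that the classification of order-$5$ and order-$7$ Graeco-Latin cubes can legitimately be fed into Theorem~\ref{thm:linearlatin}; this is immediate from the definitions (any subarray of a Latin hypercube obtained by fixing coordinates is again a Latin hypercube, and orthogonality is inherited on every $q\times q$ subarray), but it should be stated explicitly. Equally, one should note that for a prime $q$ a linear Graeco-Latin cube exists — e.g.\ $(x_1+x_2+x_3,\ x_1+2x_2+4x_3)$ over $\mathbb{F}_q$ with $q\geq 5$ prime has orthogonal components — so the unique equivalence class in the $k=3$ classification is the linear one rather than some a~priori nonlinear exotic class. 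With these observations in place the theorem follows by induction on $k$: the $k=3$ case from \cite{KO14}, and the inductive step from Theorem~\ref{thm:linearlatin}.
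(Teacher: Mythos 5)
Your treatment of the case $d=3$ is essentially the paper's argument and is fine: for $k=3$ the classification of Graeco-Latin cubes of orders $5$ and $7$ (one class each, containing a linear representative) settles it, and for $k\geq 4$ a single application of Theorem~\ref{thm:linearlatin} with $r=2$ works, since fixing $k-3$ coordinates of a Graeco-Latin hypercube indeed yields a Graeco-Latin cube. The genuine gap is your reduction of $d\geq 4$ to $d=3$. First, the operation you describe is misstated: shortening (fixing symbols in $d-3$ coordinates) turns a $(k+d-1,q^k,d)_q$ code into a $(k+2,q^{k-d+3},d)_q$ code, not a $(k+2,q^k,3)_q$ code; the parameters you quote correspond to puncturing (deleting coordinates), which in the hypercube language simply forgets the hypercubes $f_3,\dots,f_{d-1}$, so knowing the punctured code is equivalent to a linear code says nothing about the discarded coordinates and cannot be ``re-glued.'' Second, the principle you invoke to justify re-gluing --- ``a code is linear iff all of its shortenings are'' --- is not a harmless preservation fact but is essentially the theorem being proved: shortenings in different coordinates may be equivalent to linear codes via mutually incompatible symbol permutations, and showing that one common system of permutations $\alpha_1,\dots,\alpha_k,\beta_1,\dots,\beta_{d-1}$ works simultaneously is exactly the content of Lemmas~\ref{lem:uniqrecon}--\ref{lem:affine} and Theorem~\ref{thm:linearlatin} (and it is proved there only for prime $q$ and in the hypercube-section sense, not for arbitrary shortenings of arbitrary codes). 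Citing it as obvious makes the step circular; and as a general statement it cannot be taken for granted at all, as the existence of nonlinear perfect codes with many linear shortenings (e.g.\ the $q=8$ codes of this paper, or Lindstr\"om's codes) indicates.

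The paper avoids this entirely by never reducing to $d=3$: it passes from the $(k+d-1,q^k,d)_q$ code to the full $(d-1)$-tuple $(f_1,\dots,f_{d-1})$ of mutually orthogonal Latin hypercubes and applies Theorem~\ref{thm:linearlatin} with $r=d-1$, whose hypothesis (every pair of $3$-dimensional sections is linear) is supplied by the classification of Graeco-Latin cubes of orders $5$ and $7$. That is the step your proposal is missing: for $d\geq 4$ you need the tuple version ($r\geq 3$) of the linearity statement, in which a single choice of the permutations $\alpha_j$ is shown to linearize all $f_i$ at once, rather than the $r=2$ case plus an unproved gluing claim. To repair your write-up, drop the reduction and argue exactly as in your $d=3$ case but with the whole tuple, i.e.\ invoke Theorem~\ref{thm:linearlatin} for $r=d-1$ (noting also that its stated hypothesis $k\geq 4$ means the case $k=3$, $d\geq 4$ needs a separate remark, since there the pairwise linearity of the cubes $(f_1,f_i)$ must be fed into the affine-permutation argument of Lemma~\ref{lem:affine} directly).
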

\begin{proof}
For every $(n,q^k,d)_q$ code $C$ with $n=k+d-1$, there is a $(d-1)$-tuple of mutually orthogonal Latin hypercubes $(f_1,f_2,\dots,f_{d-1})$ of order $q$ and dimension $n$ such that $C$ is the set of $n$-tuples $(x_1,x_2,\dots,x_n)$ that satisfy
\begin{IEEEeqnarray*}{rCl}
f_1(x_1,x_2,\dots,x_k) &=& x_{k+1}, \\
f_2(x_1,x_2,\dots,x_k) &=& x_{k+2}, \\
&\vdots& \\
f_{d-1}(x_1,x_2,\dots,x_k) &=& x_{k+d-1}.
\end{IEEEeqnarray*}
Because every Graeco-Latin cube of order $5$ or $7$ is linear, $(f_1,\dots,f_k)$ is a linear $(d-1)$-tuple of Latin hypercubes for $q=5,7$ by Theorem~\ref{thm:linearlatin}. Therefore, $C$ is equivalent to a linear code.
\end{proof}

\begin{cor}[MDS conjecture for $q=5,7$]
For $q\in \{5,7\}$, $k\geq 2$ and $d=n-k+1 > 2$, there exists an $(n,q^{k}, n-k+1)_q$ MDS code if and only if $n\leq q+1$.
\end{cor}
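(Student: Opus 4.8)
The plan is to prove the two directions of the ``if and only if'' separately. The forward direction — that $n\le q+1$ guarantees existence — should reduce to classical constructions of linear MDS codes and needs no hypothesis on $q$; the converse is where Theorem~\ref{thm:linearcode} does the work, after which only known facts about linear MDS codes over $\mathbb{F}_5$ and $\mathbb{F}_7$ remain.

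For the forward direction I would argue as follows. Assume $n\le q+1$; since $d=n-k+1>2$ we have $2\le k\le n-2\le q-1$. Over $\mathbb{F}_q$ there is a doubly extended Reed--Solomon code, that is, a linear $[q+1,k',q+2-k']_q$ MDS code, for every $k'$ with $2\le k'\le q-1$. Taking $k'=k+(q+1-n)$ — which lies in $[2,q-1]$ precisely because $k+2\le n\le q+1$ — and shortening this code $q+1-n$ times, each shortening preserving the MDS property, produces a linear, hence unrestricted, $(n,q^k,n-k+1)_q$ MDS code. Alternatively one may simply quote the existence of linear $[n,k]_q$ MDS codes for all $n\le q+1$ from \cite[Ch.~11]{MS77}.

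For the converse I would take an $(n,q^k,n-k+1)_q$ MDS code $C$ with $q\in\{5,7\}$, $k\ge2$, $d=n-k+1\ge3$, and deduce $n\le q+1$ by splitting on $k$. If $k=2$, then $C$ corresponds to a set of $n-2$ mutually orthogonal Latin squares of order $q$, and since there are at most $q-1$ such squares \cite{ACD07}, we obtain $n-2\le q-1$. If $k\ge3$, then Theorem~\ref{thm:linearcode} makes $C$ equivalent to a linear $[n,k,n-k+1]_q$ code, which — because $k\ge3$ and $n-k=d-1\ge2$ — is a nontrivial linear MDS code over $\mathbb{F}_5$ or $\mathbb{F}_7$. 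For these small prime alphabets the MDS conjecture is established, so $n\le q+1$; if $k$ happens to exceed $q-1$, I would first pass to the dual code, which is again a linear MDS code (of dimension $n-k$), to reduce to the range covered by the classical theory of arcs in $PG(r,q)$ for $q\le 7$.

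The one non-elementary input is thus the resolved MDS conjecture for linear codes over $\mathbb{F}_5$ and $\mathbb{F}_7$; granting it, the rest is bookkeeping. The step I expect to need the most care is making sure this linear result is invoked for every admissible pair $(n,k)$ — in particular when $k\ge q$, where the bare form of the conjecture does not directly apply — but the duality of the MDS property reduces each such case to a linear MDS code of small dimension, for which the bound $n\le q+1$ is classically known.
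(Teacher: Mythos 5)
Your proposal is correct and follows essentially the same route as the paper: the case $k=2$ via the bound on the number of mutually orthogonal Latin squares, and $k\geq 3$ via Theorem~\ref{thm:linearcode} combined with the proved MDS conjecture for linear codes over prime fields \cite{B12}, with the existence direction (via extended Reed--Solomon codes) left implicit in the paper. Two minor remarks: the $[q+1,k',q+2-k']_q$ code you invoke is the (singly) extended Reed--Solomon code rather than the doubly extended one, and the case $k\geq q$ you flag cannot actually occur, since shortening the dual of a linear $[n,k]$ MDS code with $d\geq 3$ down to dimension $2$ yields a $[k+2,2,k+1]_q$ MDS code and hence $k\leq q-1$.
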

\begin{proof}
The case $k=2$ follows from the well known theorem that the size of a set of mutually orthogonal Latin squares of order $q$ is at most $q-1$. We have shown that the existence of any MDS code for $k\geq 3$, $d \geq 3$ implies the existence of a linear code with the same parameters, and the MDS conjecture is true for linear codes over prime fields \cite{B12}.
\end{proof}

\begin{lem} \label{lem:linearcode}
Let $q$ be a prime power and $n \in \{q-1,q,q+1\}$. All linear $(n,q^{n-2},3)_q$ codes are equivalent.
\end{lem}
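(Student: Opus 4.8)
The plan is to reinterpret the statement in terms of point sets on a projective line. A linear $(n,q^{n-2},3)_q$ code $C$ is, by definition, a linear MDS code of dimension $k=n-2$, so it admits a parity-check matrix $H$ which is a $2\times n$ matrix over $\mathbb{F}_q$, and the condition $d=3$ is equivalent (for $n\geq 3$) to the statement that every two columns of $H$ are linearly independent. Since rescaling a column does not change the one-dimensional subspace it spans, the columns of $H$ determine a set $S$ of $n$ distinct points of the projective line $\mathrm{PG}(1,q)$, which has exactly $q+1$ points; in particular $n\leq q+1$, consistent with the hypothesis.

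Next I would establish the dictionary between code equivalence and this projective picture. A coordinate permutation of $C$ permutes the columns of $H$, hence permutes the points of $S$; a coordinatewise rescaling rescales columns of $H$, which does not change $S$; and left multiplication of $H$ by $A\in\mathrm{GL}(2,q)$ yields another parity-check matrix of the same code but transforms $S$ by the induced element of $\mathrm{PGL}(2,q)$. Putting these observations together, if two linear $(n,q^{n-2},3)_q$ codes have associated point sets lying in a single $\mathrm{PGL}(2,q)$-orbit, then the two codes are related by a monomial transformation (a coordinate permutation followed by coordinatewise scalings), and such a transformation is a special case of the equivalences used in this paper, since $x\mapsto ax$ with $a\neq 0$ is a permutation of $\mathbb{F}_q$.

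It then remains to show that for $n\in\{q-1,q,q+1\}$ there is only one $\mathrm{PGL}(2,q)$-orbit of $n$-element subsets of $\mathrm{PG}(1,q)$. This follows from the sharp $3$-transitivity of $\mathrm{PGL}(2,q)$ on the $q+1$ points of the line, which in particular makes it transitive on points and $2$-transitive, hence transitive on unordered pairs. For $n=q+1$ the only $n$-set is the whole line; for $n=q$ the complement of $S$ is a single point, and transitivity on points gives a single orbit; for $n=q-1$ the complement is a $2$-subset, and transitivity on pairs gives a single orbit. In each case all admissible $S$ form one orbit, so all linear $(n,q^{n-2},3)_q$ codes are equivalent (to the code whose parity-check columns run through the corresponding point configuration; for $n=q+1$ this is the Hamming code).

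The step that needs the most care is the dictionary of the second paragraph: one must verify precisely that passing from a code to a parity-check matrix and back behaves well under coordinate permutations and coordinatewise scalings, so that a $\mathrm{PGL}(2,q)$-relation between point sets genuinely lifts to an equivalence of the codes (equivalently, that parity-check matrices related by left multiplication by $\mathrm{GL}(2,q)$ and right multiplication by a monomial matrix correspond to monomially equivalent codes). One should also dispose of the degenerate small cases separately, for instance $q=2$ or $k=n-2\leq 1$, where the parameter sets are trivial and the claim is immediate; the projective argument itself applies whenever the parameters are non-degenerate.
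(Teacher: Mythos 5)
Your proposal is correct, and it reaches the conclusion by a somewhat different (more conceptual) route than the paper. The paper works directly with an explicit canonical parity-check matrix: it scales columns to the form with second-row entries $a_1,\dots,a_{n-1}$, observes that at most two field elements are missing from $\{a_1,\dots,a_{n-1}\}$ when $n\geq q-1$, applies an affine map $x\mapsto bx+c$ to move the missing elements to a standard position, and then uses row operations and column scalings/permutations to land on one fixed matrix. Your argument packages the same ingredients invariantly: you pass to the $n$-point subset of $\mathrm{PG}(1,q)$ determined by the columns, check that left multiplication by $\mathrm{GL}(2,q)$ together with right monomial action corresponds exactly to code equivalence (this is the same observation as the paper's ``row operations do not change the code, column scalings and permutations preserve equivalence''), and then replace the paper's affine-map step by transitivity of $\mathrm{PGL}(2,q)$ on subsets whose complements have size $0$, $1$, or $2$, which follows from sharp $3$-transitivity. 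What the paper's version buys is an explicit canonical parity-check matrix (useful as a concrete representative, e.g.\ the Hamming code for $n=q+1$); what your version buys is uniformity and a cleaner treatment of one small point the paper glosses over, namely that for $n<q+1$ the code need not have a parity-check column proportional to $(0,1)^{T}$, so the paper's displayed normal form already implicitly uses a row operation, whereas in your formulation no column is singled out. The one step you leave as a verification --- that $H_2 = A H_1 M$ with $A\in\mathrm{GL}(2,q)$ and $M$ monomial forces the codes to be monomially, hence paper-equivalent --- is routine and correctly identified, and your separate disposal of the degenerate tiny-parameter cases is appropriate.
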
 
\begin{proof}
Let $\alpha$ be a primitive element of $\mathbb{F}_q$. After multiplying each column by a scalar, the parity check matrix of an $(n,q^{n-2},3)_q$ code can be written as
\[
\begin{pmatrix}
0 & 1 & 1 & \cdots & 1 \\
1 & a_1 & a_2 & \cdots & a_{n-1}
\end{pmatrix},
\]
where all $a_i$ are distinct.
Because at most two elements from $\mathbb{F}_q$ are missing from $S=\{a_1,a_2,\dots,a_{n-1}\}$ when $n\geq q-1$, there is an affine transformation $x\mapsto bx + c$ with $b\neq 0$ that maps $S$ to $\{0,1,\alpha^1,\alpha^2, \dots,\alpha^{n-2}\}$. Multiplying the second row by $b$, adding the first row multiplied by $c$ to the first row, multiplying the first column by $b^{-1}$
and permuting the columns yields
\[
\begin{pmatrix}
0 & 1 & 1 & 1 & \cdots & 1 \\
1 & 0 & 1 & \alpha^1 & \cdots & \alpha^{n-2}
\end{pmatrix}.
\]
Because elementary row operations on the parity check matrix do not change the code and multiplying a column and permuting columns maintain equivalence, every linear $(n,q^{n-2},3)_q$ code is equivalent to the code with the parity check matrix described above.
\end{proof}
\begin{cor}
The $(6,5^4,3)_5$ code and the $(n,7^{n-2},3)_7$ codes for $n=6,7,8$ are unique.
\end{cor}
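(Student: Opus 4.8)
The plan is simply to combine Theorem~\ref{thm:linearcode} with Lemma~\ref{lem:linearcode}. First I would record that a $(6,5^4,3)_5$ code is a $(k+d-1,5^k,d)_5$ code with $k=4$ and $d=3$, which satisfies the hypothesis $k,d\geq 3$ of Theorem~\ref{thm:linearcode}; hence every such code is equivalent to a linear code. Likewise, for $n\in\{6,7,8\}$ a $(n,7^{n-2},3)_7$ code is a $(k+d-1,7^k,d)_7$ code with $k=n-2\in\{4,5,6\}$ and $d=3$, again meeting the hypothesis of Theorem~\ref{thm:linearcode}, so every such code is equivalent to a linear code as well.

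Next I would apply Lemma~\ref{lem:linearcode}. For $q=5$ we have $n=6=q+1$, so $n\in\{q-1,q,q+1\}$ and all linear $(6,5^4,3)_5$ codes are equivalent. For $q=7$ we have $\{6,7,8\}=\{q-1,q,q+1\}$, so for each such $n$ all linear $(n,7^{n-2},3)_7$ codes are equivalent. Combining the two steps: in each listed case every code with the given parameters is equivalent to a linear code, and all linear codes with those parameters form a single equivalence class, so there is at most one equivalence class. Since a linear code with each parameter set exists — for $q=5$ the perfect (Hamming) code has parameters $(6,5^4,3)_5$, and for $q=7$ the Hamming code $(8,7^6,3)_7$ together with its one- and two-fold shortenings, which remain MDS by the shortening theorem of Section~II, supply the cases $n=8,7,6$ — exactly one equivalence class exists in each case, i.e.\ the codes are unique.

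I do not expect a genuine obstacle here, as the corollary is a formal consequence of the two cited results. The only points requiring care are the bookkeeping that the parameter constraints $k\geq 3$, $d\geq 3$ (for Theorem~\ref{thm:linearcode}) and $n\in\{q-1,q,q+1\}$ (for Lemma~\ref{lem:linearcode}) are indeed met in every listed instance, and the remark on existence so that the word \emph{unique} is not vacuous.
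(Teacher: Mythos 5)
Your proposal is correct and follows exactly the paper's route: Theorem~\ref{thm:linearcode} gives equivalence to a linear code, and Lemma~\ref{lem:linearcode} shows all linear codes with these parameters are equivalent, with the parameter checks ($k,d\geq 3$ and $n\in\{q-1,q,q+1\}$) verified just as the paper implicitly assumes. The added existence remark via the Hamming code and its shortenings is a harmless extra not present in the paper's two-line proof.
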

\begin{proof}
By Theorem~\ref{thm:linearcode} these codes are linear, and by Lemma~\ref{lem:linearcode} they are equivalent.\end{proof}

\section{Computational Classification} \label{sec:computational}

\subsection{Algorithm}

The algorithm to be presented generates representatives of all equivalence classes of $(n+1,q^{n-1},3)_q$ codes using an ordered set of representatives of equivalence classes of $(n,q^{n-2},3)_q$ codes, denoted by $\hat{S}^{n} = \{ \hat{C}^n_1, \hat{C}^n_2, \dots, \hat{C}^n_N\}$. For simplicity, we assume that every $\hat{C}^{n}_k$ contains the all-zero codeword.
\begin{defn}
Let $\phi$ be a function that maps each $(n,q^{n-2},3)_q$ code $C$ to an integer in $[N]$ such that $C \cong \hat{C}^{n}_{\phi(C)}$.
\end{defn}

To reduce the search tree and the number of equivalent codes generated, we construct only $(n+1,q^{n-1},3)_q$ codes and their subsets of a certain form. More precisely, we call a subset $C$ of $\mathbb{F}_q^{n+1}$ \emph{semi-canonical} if it satisfies the following properties:

\begin{enumerate}
\item $C$ has minimum distance $3$,
\item\label{itm:A} $s(C,1,0) = \hat{C}^{n}_k$ for some $k$, 
\item\label{itm:B} For all $i\in[n+1]$ and $v \in \mathbb{F}_q$ for which $s(C,i,v)$ has $q^{n-2}$ codewords, $\phi(s(C,i,v)) \geq k$.
\end{enumerate}
Every $(n+1,q^{n-1},3)_q$ code $C$ is equivalent to a code that satisfies these properties.

The central part of the algorithm is a procedure which, given an index $k$, a coordinate $i \in [n]$, and $v \in \mathbb{F}_q$, finds, up to a permutation of the values $\mathbb{F}_q\setminus\{0\}$ in the first coordinate, all possible $(n,q^{n-2},3)_q$ codes $C$ for which 
\[
e(\hat{C}^{n}_k,1,0) \cup e(C,i+1,v)
\]
is semi-canonical. A necessary condition is that 
\begin{equation}
s(C,1,0) = s(\hat{C}^{n}_k,i,v). \label{eq:algonecess}
\end{equation}
The following theorem yields a way to exhaustively construct the codes $C$ satisfying the above condition.

\begin{defn}
For each $i \in [n]$ and $v \in \mathbb{F}_q$, let $h_{i,v} \in G_{n}$ be the element that applies the cyclic permutation $(1\,2\,\cdots\,i)$ to the coordinates and then swaps the values $v$ and $0$ in the first coordinate.
\end{defn}

\begin{thm} \label{thm:algorithm}
Let $\hat{C}$ be an $(n,q^{n-2},3)_q$ code and let $\hat{D}$ be an $(n-1,q^{n-3},3)_q$ code. Let $C$ be a code equivalent to $\hat{C}$ for which $s(C,1,0) = \hat{D}$. Now $C$ can be expressed as
\[
C = g' e(g,1) h_{i,v} \hat{C},
\]
where  $g' \in G_n$ permutes the values $\mathbb{F}_q\setminus\{0\}$ in the first coordinate and keeps other coordinates intact, $(i,v)$ is a coordinate-value pair, and $g \in \Iso(s(h_{i,v} \hat{C},1,0), \hat{D})$.
\end{thm}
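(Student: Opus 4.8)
The plan is to exhibit the claimed decomposition of $C$ explicitly by tracking how the shortened code $s(\cdot,1,0)$ transforms under the relevant group elements. Since $C$ is equivalent to $\hat{C}$, there is some $\tilde{g}\in\Iso(\hat{C},C)$, i.e. $C=\tilde{g}\hat{C}$. The obstacle is that $\tilde{g}$ is an arbitrary element of $G_n$: it may move the first coordinate somewhere else and permute its symbols arbitrarily, so $s(C,1,0)$ need not obviously relate to any single shortening of $\hat{C}$. The key observation is that $\tilde{g}$ sends the pair $(1,0)$ to some coordinate-value pair $(i,v)$ in the target, and it is precisely this pair that governs which shortening of $\hat{C}$ is involved. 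So the first step is: write $\tilde{g}^{-1}(1,0)=(i,v)$ (equivalently, $\tilde g$ maps coordinate $i$ to coordinate $1$ and maps the symbol $v$ in coordinate $i$ to the symbol $0$ in coordinate $1$), and observe that the codewords of $\hat{C}$ with symbol $v$ in coordinate $i$ are exactly those mapped by $\tilde{g}$ to codewords of $C$ with symbol $0$ in coordinate $1$.

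The second step is to normalize the action on that distinguished coordinate using $h_{i,v}$. By definition, $h_{i,v}$ applies the cycle $(1\,2\,\cdots\,i)$ to the coordinates — in particular moving coordinate $i$ to coordinate $1$ — and then swaps $v$ and $0$ in the (new) first coordinate; hence $h_{i,v}(i,v)=(1,0)$. Therefore $h_{i,v}\hat{C}$ has the property that its codewords with $0$ in coordinate $1$ are exactly the images under $h_{i,v}$ of the codewords of $\hat C$ with $v$ in coordinate $i$, and consequently $s(h_{i,v}\hat{C},1,0)$ is, as an abstract code of length $n-1$, equivalent to $s(\hat C,i,v)$, which in turn is equivalent to $s(C,1,0)=\hat{D}$ by hypothesis (both are $(n-1,q^{n-3},3)_q$ MDS codes obtained from equivalent length-$n$ codes). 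Thus $\Iso(s(h_{i,v}\hat{C},1,0),\hat{D})$ is nonempty; pick any $g$ in it.

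The third step is to compare $C$ with $e(g,1)h_{i,v}\hat{C}$ and show they differ only by an element $g'\in G_n$ that fixes every coordinate, fixes the symbol $0$ in coordinate $1$, and acts as a permutation of $\mathbb{F}_q\setminus\{0\}$ on coordinate $1$. The point is that both $C$ and $C'':=e(g,1)h_{i,v}\hat{C}$ are equivalent to $\hat C$, both have $0$ in coordinate $1$ on the ``right'' set of codewords, and — by the defining property of $e(g,1)$, namely $e(gC_0,i,v)=e(g,i)e(C_0,i,v)$ applied with the coordinate index $1$ — the subcode of $C''$ obtained by shortening at $(1,0)$ is $g\, s(h_{i,v}\hat C,1,0)=\hat D = s(C,1,0)$. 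So $C$ and $C''$ have identical shortening at $(1,0)$. Now one invokes the structure of MDS codes (each is the disjoint union over $w\in\mathbb{F}_q$ of $e(s(\cdot,1,w),1,w)$, by the second basic theorem): having fixed the $w=0$ layer to agree, any equivalence between $C$ and $C''$ that respects this — and one checks such an equivalence exists and must fix coordinate $1$ and the symbol $0$ there — can only shuffle the remaining layers, i.e. it is a permutation of $\mathbb{F}_q\setminus\{0\}$ in coordinate $1$ together with matching permutations of the other coordinates; a further routine argument (using that the layers are determined once coordinate $1$'s symbol permutation is chosen) reduces this to a $g'$ acting only on the nonzero symbols of coordinate $1$. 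Hence $C=g'C''=g'\,e(g,1)\,h_{i,v}\,\hat C$, as claimed.

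I expect the main obstacle to be the third step: carefully arguing that the residual discrepancy $g'$ can be taken to fix all coordinates except for permuting nonzero symbols in coordinate $1$, rather than being some more complicated element of $G_n$. The cleanest route is probably to note that $e(g,1)h_{i,v}\tilde g^{-1}$ maps $C$ to $C''$ while fixing the pair $(1,0)$, and then to show that any automorphism-type element relating two length-$n$ MDS codes that agree on their $(1,0)$-shortening and fixes $(1,0)$ must, because the $q^{n-3}$ codewords in that layer already pin down the action on coordinates $2,\dots,n$ (the shortened code being ``large'' enough — indeed it determines the symbol permutations on those coordinates up to the ambiguity absorbed into $g$), reduce to something acting only on the nonzero symbols of coordinate $1$; the remaining freedom is exactly the permutation group of $\mathbb{F}_q\setminus\{0\}$, which is what $g'$ captures.
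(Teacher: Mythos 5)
Your steps 1 and 2 match the paper's opening move (take $\tilde g$ with $C=\tilde g\hat C$ and set $(i,v)=\tilde g^{-1}(1,0)$), but step 3 contains a genuine gap: you fix an \emph{arbitrary} $g\in\Iso(s(h_{i,v}\hat C,1,0),\hat D)$ and then claim that the element $e(g,1)h_{i,v}\tilde g^{-1}$ relating $C$ to $C''=e(g,1)h_{i,v}\hat C$, because it fixes the pair $(1,0)$ and the two codes agree on their $(1,0)$-shortening, ``must reduce to something acting only on the nonzero symbols of coordinate $1$.'' That is false in general. An element of $G_n$ fixing $(1,0)$ and preserving the layer $e(\hat D,1,0)$ setwise can still act nontrivially on coordinates $2,\dots,n$, namely by any automorphism of $\hat D$; the shortened code does \emph{not} pin down the action on those coordinates when $\Aut{\hat D}$ is nontrivial (and in this application it typically is). Concretely, since $\Iso(s(h_{i,v}\hat C,1,0),\hat D)=\Aut{\hat D}\,g_0$ with $g_0=s(\tilde g h_{i,v}^{-1},1)$, your arbitrary $g$ equals $a g_0$ for some $a\in\Aut{\hat D}$, and then $C=g'\,e(a^{-1},1)\,C''$: the residual element carries the factor $e(a^{-1},1)$, which is not a permutation of the nonzero symbols in coordinate $1$. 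Your argument would in fact prove the stronger statement that the decomposition holds for \emph{every} $g$ in the isomorphism set, which is not true; the theorem only asserts it for \emph{some} $g$.

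The fix is exactly what the paper does, and it is shorter than your step 3: do not choose $g$ in advance. Since $\tilde g h_{i,v}^{-1}$ fixes $(1,0)$, it factors directly as $\tilde g h_{i,v}^{-1}=g'\,e(g,1)$ with $g'$ permuting only the nonzero symbols of coordinate $1$ and $g=s(\tilde g h_{i,v}^{-1},1)\in G_{n-1}$; then one computes $\hat D=s(C,1,0)=g\,s(h_{i,v}\hat C,1,0)$, which shows this particular $g$ lies in $\Iso(s(h_{i,v}\hat C,1,0),\hat D)$. Equivalently, you could repair your argument by absorbing the automorphism $a$ into the choice of $g$, but as written the ``pinning down'' claim is the missing (and incorrect) step.
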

\begin{proof}
Let $g'' \in G_n$ such that $C=g''\hat{C}$. Let $(i,v) = g''^{-1}(1,0)$. Because
$g'' h_{i,v}^{-1} (1,0) = (1,0)$,
we can express $g'' h_{i,v}^{-1}$ as 
\[
g'' h_{i,v}^{-1} = g' e(g,1,0),
\]
where $g'$ permutes the nonzero values in the first coordinate and keeps other coordinates intact and $g = s(g''h_{i,v}^{-1},1) \in G_{n-1}$. We obtain
\[
\hat{D} = s(C,1,0) = s(g''\hat{C},1,0)
=s(g' e(g,1,0) h_{i,v} \hat{C}, 1, 0) = g s(h_{i,v} \hat{C},1,0),
\]
and thus $g \in \Iso(s(h_{i,v} \hat{C},1,0), \hat{D})$.
\end{proof}

The codes $C$ satisfying \eqref{eq:algonecess} are now generated with the following algorithm. We loop over all $l=k,k+1,\dots,|\hat{S}^{n}|$ and all coordinate-value pairs $(j,w)$ for which $s(\hat{C}^{n}_l,j,w) \cong s(\hat{C}^{n}_k,i,v)$. In each step, we loop over all $g \in \Iso(s(h_{j,w}\hat{C}^{n}_l,1,0), s(\hat{C}^{n}_k,1,0))$ and consider the code 
\begin{equation}
C = e(g,1)h_{j,w}\hat{C}^{n}_l, \label{eq:code-subprocedure}
\end{equation}
and report it if 
\[
e(\hat{C}^{n}_k,1,0) \cup e(C,i+1,v)
\]
has minimum distance $3$.

We generate the $(n+1,q^{n-1},3)_q$ codes in two phases. In the first phase, we consider codes containing the codewords that have a $0$ in the first or the second coordinate. These codes are potential subsets of $(n+1,q^{n-1},3)_q$ codes. More precisely, we construct, for each $k$ separately, the semi-canonical codes that are of the form
\[
e(\hat{C}^{n}_k,1,0) \cup e(C,2,0),
\]
where $C$ has the property that for all $v\in\mathbb{F}_q$ there is a $w\in\mathbb{F}_q$ such that $C$ contains the codeword $v00..0vw$. These codes form the seeds for the next phase. The permutation of the nonzero values in the first coordinate of $C$ can be chosen to satisfy the last requirement, so the seeds can be constructed by the procedure described above. We perform isomorph rejection on the obtained seeds, since equivalent seeds would be augmented to equivalent codes.

In the second phase, we start from a seed 
\[
C = e(\hat{C}^{n}_k,1,0) \cup e(C',2,0)
\]
and find all semi-canonical $(n+1,q^{n-1},3)_q$ codes that have $C$ as a subset. These codes can be written in the form
\[
\bigcup_{v \in \mathbb{F}_q} e(C''_v,3,v),
\]
where each $C''_v$ is an $(n,q^{n-2},3)_q$ code with the following properties:
\begin{itemize}
\item $\phi(C''_v) \geq k$,
\item $e(\hat{C}^{n}_k,1,0) \cup e(C''_v,3,v)$ has minimum distance $3$,
\item $e(C',2,0) \cup e(C''_v,3,v)$ has minimum distance $3$.
\end{itemize}
The first two properties allow us to find all possible choices for the code $C''_v$ using the procedure described above. The third property implies 
\[
s(C',2,v) = s(C''_v,2,0),
\]
which either rejects a code immediately or yields a unique permutation of the values in the first coordinate of $C''_v$. The requirement that $e(C',2,0) \cup e(C''_v,3,v)$ have minimum distance $3$ can also be used to reject some choices. When all possible choices for $C''_v$ for each $v$ have been generated, we loop over all sets of $C''_v$ for $v \in \mathbb{F}_q$ and report
\[
D = \bigcup_{v \in \mathbb{F}_q} e(C''_v,3,v)
\]
if it is semi-canonical.

Most time is spent using \emph{nauty} to detect code equivalence, so an obvious way to optimize performance is to reduce the number of code equivalence instances that need to be solved. For example, detecting the equivalence class where each shortened code $s(\hat{C}^n_k,i,v)$ belongs needs to be done only when generating the codes of length $n$, and the results can be used when generating the codes of length $n+1$.
In addition, when generating codes in \eqref{eq:code-subprocedure}, we can consider only one $(j,w)$ from each orbit of the coordinate-value pairs in the automorphism group of $\hat{C}^n_l$. 

\subsection{Results}

The algorithm was run for the case $q=8$ starting from the representatives of the $12484$ equivalence classes of $(5,8^3,3)_8$ codes constructed in \cite{KO14} and proceeding step by step to the $(9,8^7,3)_8$ codes. The search yielded $14$, $8$, $4$, and $4$ equivalence classes of $(n,8^{n-2},3)_8$ codes for $n=6,7,8,9$, respectively. The orders of the automorphism groups of the codes are given in Table~\ref{tab:aut}. One of the equivalence classes of perfect codes correspond to the Hamming code, and the other three are new nonlinear codes for which no known construction exists; for example, the construction in \cite{PRV05} is equivalent to the linear code with the present definition of code equivalence. The nonlinear codes are presented in the Appendix.

\begin{table}
\renewcommand{\arraystretch}{1.1}
\centering
\caption{Automorphism Group Orders of $(n,8^{n-2},3)_8$ Codes}
\label{tab:aut}
\begin{tabular}{rr|rr}
\hline
\multicolumn{2}{r|}{$n=6$} & \multicolumn{2}{r}{$n=7$} \\
$|\Aut{C}|$ & $\#$ & $|\Aut{C}|$ & $\#$  \\
\hline
1\,536 & 3 & 16\,384 & 1 \\
2\,048 & 1 & 24\,576 & 1 \\
3\,072 & 1 & 65\,536 & 2 \\
4\,096 & 5 & 86\,016 & 1 \\
12\,288 & 3 & 98\,304 & 1 \\
516\,096 & 1 & 196\,608 & 1 \\
& & 9\,633\,792 & 1 \\
\hline
\multicolumn{2}{r|}{$n=8$} & \multicolumn{2}{r}{$n=9$} \\
$|\Aut{C}|$ & $\#$ & $|\Aut{C}|$ & $\#$  \\
\hline
393\,216 & 1 & 25\,165\,824 & 1 \\
688\,128 & 1 & 44\,040\,192 & 1 \\
786\,432 & 1 & 50\,331\,648 & 1 \\
308\,281\,344 & 1 & 22\,196\,256\,768 & 1 \\
\hline
\end{tabular}
\end{table}

We give in Table~\ref{tab:res}, for each $n$ separately, the number of seeds before and after isomorph rejection and the number of codes the inequivalent seeds were augmented to, again before and after isomorph rejection. The time required for the search for each $n$ is also given and corresponds to one core of an Intel Xeon E5-2665 processor. The time for case $n$ includes the search for seeds and augmenting seeds, isomorph rejection after both steps, and identifying the shortened codes of obtained $(n,q^{n-2},3)_q$ codes to detect whether the codes are semi-canonical. These results can also be used when generating $(n+1,q^{n-1},3)_q$ codes, so the time requirement of a step would be higher if no previous results were available.

\begin{table*}
\centering
\caption{Details of the Search}
\label{tab:res}
\begin{tabular}{r|rrr|r|r}
\hline
$n$ & \# of seeds & \# of inequivalent seeds & \# of codes & \# of inequivalent codes & CPU time (hours) \\
\hline
$6$ & 122 & 107 & 21 & 14 &   15 \\
$7$ &  15 &   9 &  9 &  8 &   49 \\
$8$ &   9 &   6 &  6 &  4 &  340 \\
$9$ &   4 &   4 &  4 &  4 & 1516 \\
\hline
\end{tabular}
\end{table*}

\subsection{Consistency Check}

To check the consistency of the results given by the algorithm, we count for each $k$ in two ways the number $N_k$ of semi-canonical $(n+1,q^{n-1},3)_q$ codes $C$ for which $s(C,1,0)=\hat{C}^{n}_k$.

The first count is obtained by detecting subcodes of the $(n+1,q^{n-1},3)_q$ codes codes obtained. For an $(n+1,q^{n-1},3)_q$ code $C$ and an $(n,q^{n-2},3)_q$ code $C'$, let $S(C,C')$ be the number of pairs $(i,v)$ such that $s(C,i,v) \cong C'$. Let $\mathcal{S}_k$ be the set of obtained inequivalent $(n+1,q^{n-1},3)_q$ codes $C$ for which $\min_{i,v} \phi(s(C,i,v)) = k$. Consider an arbitrary $C \in \mathcal{S}_k$. The size of the equivalence class of $C$ is simply $|G_{n+1}| / |\Aut{C}|$. The proportion of the codes $C'$ in the equivalence class for which $s(C',1,0) \cong \hat{C}^{n}_k$ is $S(C,\hat{C}^{n}_k) / (q(n+1))$. Further, the proportion of those that have $s(C',1,0) = \hat{C}^{n}_k$ is $|\Aut{C_k}|/|G_{n}|$  Therefore, the total number $N_k$ becomes
\[
N_k = \frac{|\Aut{C_k}|}{|G_{n}|} \sum_{C \in \mathcal{S}_k} \frac{|G_{n+1}|S(C,\hat{C}^{n}_i)}{|\Aut{C}|q(n+1)} 
= (q-1)! |\Aut{C_k}| \sum_{C \in \mathcal{S}_k} \frac{S(C,\hat{C}^{n}_k)}{|\Aut{C}|}.
\]

On the other hand, the number $N_k$ can be obtained by finding the number of different codes that would be generated by the algorithm if equivalent codes were not rejected at any phase of the algorithm. Let $\mathcal{T}_k$ be the set of seeds obtained during the search starting from the code $\hat{C}^{n}_k$ that were not rejected during the isomorph rejection. For each seed $D \in \mathcal{T}_k$, let $N(D)$ be the number of different seeds equivalent to $D$ obtained during the search, and let $M(D)$ be the number of semi-canonical full codes that were obtained from the seed. Now the count becomes
\[
N_k = (q-1)! \sum_{D \in \mathcal{T}_k} N(D) M(D).
\]
Here, the factor $(q-1)!$ accounts for the permutations of $\mathbb{F}_q\setminus\{0\}$ in the first coordinate of the seed.

This check also alerts if the obtained full codes contain subsets equivalent to codes that should have been seeds but were not obtained during the search, or if any seeds that are equivalent to obtained seeds are missing.

\appendix[Perfect One-Error-Correcting $8$-ary MDS Codes]

It turns out that every nonlinear $(9,8^7,3)_8$ code $C$ has the property that there is a coordinate $i$ such that $s(C,i,v)$ is equivalent to the linear $(8,8^6,3)_8$ code for each $v$. This allows us to present the nonlinear perfect codes in terms of these shortened codes.

Let $\alpha$ be a primitive element of $\mathbb{F}_8$ with $\alpha^3+\alpha^2+1=0$. An element $x\in\mathbb{F}_8$ can be written as 
\[
x = a_2 \alpha^2 + a_1 \alpha + a_0,
\]
where $a_0,a_1,a_2 \in \{0,1\}$. We denote the element $x$ by a number in $\{0,1,\dots,7\}$ whose binary representation is $a_2a_1a_0$.

Let $C' \subseteq \mathbb{F}_8^8$ be a linear code with the generator matrix
\[
\begin{pmatrix}
1 &  &  &  &  &  & 1 & 1 \\
 & 1 &  &  &  &  & 1 & 2 \\
 &  & 1 &  &  &  & 1 & 3 \\
 &  &  & 1 &  &  & 1 & 4 \\
 &  &  &  & 1 &  & 1 & 5 \\
 &  &  &  &  & 1 & 1 & 6
\end{pmatrix}.
\]
Now a nonlinear $(9,8^7,3)_8$ code $C$ can be expressed as 
\[
C = \bigcup_{v \in \mathbb{F}_8} e(g^v C', i, v),
\] 
where $i \in \{1,2,\dots,9\}$ and $g^v = (\pi^v; \sigma^v_1, \sigma^v_2, \dots, \sigma^v_8) \in G_8$ for permutations $\pi^v$ of $\{1,2,\dots,8\}$ and permutations $\sigma^v_j$ of $\mathbb{F}_8$ as defined in Section~\ref{sec:equivalence}.

Selecting the coordinate $i$ corresponds to permuting the coordinates of the perfect code, so we may choose for example $i=1$. For each of the three nonlinear equivalence classes of $(9,8^7,3)_8$, one choice of these permutations to generate one representative is given in Table~\ref{tab:nonlinperfect}. The permutations $\pi^v$ of $\{1,2,\dots,8\}$ are expressed as $\pi^v(1)\pi^v(2)\cdots\pi^v(8)$ and the permutation $\sigma^v_i$ of $\mathbb{F}_8$ as $\sigma^v_i(0)\sigma^v_i(1)\cdots\sigma^v_i(7)$.

\begin{table*}
\centering
\caption{Permutations for Constructing the Nonlinear Perfect $(9,8^7,3)_8$ Codes}
\label{tab:nonlinperfect}
\renewcommand{\arraystretch}{1.3}
\begin{tabular}{|c|c|c|c|c|c|c|c|c|}
\hline
$v$ & 0 & 1 & 2 & 3 & 4 & 5 & 6 & 7 \\
\hline
\hline
$\pi^v$ & 12836457 & 12564378 & 12743685 & 12743685 & 12835764 & 12564378 & 12835764 & 12836457 \\
$\sigma^v_1$ & 06452371 & 05371246 & 01436752 & 04653127 & 06715432 & 07426153 & 02476351 & 03624157 \\
$\sigma^v_2$ & 03624157 & 02471536 & 07536142 & 05643721 & 06235417 & 07356124 & 07426153 & 01346725 \\
$\sigma^v_3$ & 07163542 & 20635174 & 04652731 & 06235417 & 03142657 & 60541732 & 07613524 & 05731264 \\
$\sigma^v_4$ & 01346725 & 05217346 & 03625741 & 06542317 & 06457123 & 01436752 & 03146275 & 70615324 \\
$\sigma^v_5$ & 07623154 & 25734610 & 04326157 & 03652471 & 04617523 & 61327540 & 03564271 & 01736425 \\
$\sigma^v_6$ & 02764351 & 01374625 & 40236157 & 30714265 & 02347615 & 07265413 & 05632174 & 03247165 \\
$\sigma^v_7$ & 03624157 & 02476351 & 25347610 & 32175460 & 40723165 & 07531624 & 60235471 & 01346725 \\
$\sigma^v_8$ & 06523147 & 01243675 & 02375641 & 03547216 & 67531240 & 04765312 & 13742560 & 42753610 \\
\hline
\hline
$\pi^v$ & 12653478 & 12736854 & 12738645 & 12734586 & 12843576 & 12654387 & 12845367 & 12735468 \\
$\sigma^v_1$ & 02314675 & 03746512 & 06247153 & 02641375 & 05634127 & 07253416 & 04375162 & 01543267 \\
$\sigma^v_2$ & 04157362 & 06421753 & 07435216 & 01457632 & 04372615 & 03561247 & 05726314 & 03476521 \\
$\sigma^v_3$ & 07526134 & 05367241 & 01672345 & 03216547 & 02751364 & 50413627 & 70165234 & 02135746 \\
$\sigma^v_4$ & 01765423 & 04152637 & 02754631 & 07523461 & 01546732 & 01674523 & 01453276 & 40263751 \\
$\sigma^v_5$ & 05172463 & 03567214 & 07462351 & 70236451 & 20617354 & 07321546 & 05176324 & 05361742 \\
$\sigma^v_6$ & 02453716 & 06352174 & 10627534 & 01752634 & 05367124 & 25673140 & 05247136 & 02541673 \\
$\sigma^v_7$ & 04253617 & 27345160 & 25176340 & 23567410 & 07264153 & 01327456 & 05736124 & 24731650 \\
$\sigma^v_8$ & 01765423 & 40516273 & 05361427 & 06174253 & 34256170 & 02315764 & 75643120 & 04627315 \\
\hline
\hline
$\pi^v$ & 12438765 & 12438765 & 12347658 & 12347658 & 12347658 & 12438765 & 12438765 & 12347658 \\
$\sigma^v_1$ & 02164753 & 04635172 & 02471653 & 07563124 & 01735462 & 06573421 & 03712546 & 05326741 \\
$\sigma^v_2$ & 06573421 & 03712546 & 04273651 & 07561342 & 03715264 & 07241365 & 01456237 & 05146723 \\
$\sigma^v_3$ & 03712546 & 02164753 & 32671450 & 45736210 & 56423170 & 01456237 & 06573421 & 74165320 \\
$\sigma^v_4$ & 04635172 & 41672350 & 02537461 & 03641725 & 07316542 & 64521730 & 35746120 & 06124357 \\
$\sigma^v_5$ & 01273645 & 03754261 & 06527431 & 02341765 & 07213546 & 07465312 & 05621437 & 03164257 \\
$\sigma^v_6$ & 05261473 & 01637245 & 30165724 & 10547263 & 70216435 & 06743152 & 03425716 & 40321576 \\
$\sigma^v_7$ & 01456237 & 60754312 & 04631527 & 03257146 & 01372654 & 50236741 & 70423156 & 02745361 \\
$\sigma^v_8$ & 07652431 & 02513647 & 02746351 & 04531627 & 06415732 & 05341276 & 01467325 & 05123476 \\
\hline
\end{tabular}
\end{table*}


\bibliographystyle{IEEEtran}
\bibliography{IEEEabrv,paper}

\begin{IEEEbiographynophoto}{Janne I. Kokkala}
was born in Espoo, Finland, in 1988. 
He received the B.Sc.\ (Tech.) degree in engineering physics from Helsinki University of Technology (TKK), Espoo, Finland, in 2009, and 
the M.Sc.\ (Tech.) degree in engineering physics from Aalto University School of Science, Espoo, Finland, in 2013.

He is currently a Doctoral Candidate with the Department of Communications and Networking at Aalto University School of Electrical Engineering, working towards the D.Sc.\ (Tech.) degree in information theory.
\end{IEEEbiographynophoto}

\begin{IEEEbiographynophoto}{Denis S. Krotov}
Denis S. Krotov was born in Novosibirsk, Russia, in 1974. He received
the Bachelor's degree in mathematics in 1995 and the Master's degree in
1997, both from Novosibirsk State University, the Ph.D. and Dr.Sc. degrees
in Discrete Mathematics and Theoretical Cybernetics from Sobolev Institute
of Mathematics, Novosibirsk, in 2000 and 2011, respectively.

Since 1997, he has been with Theoretical Cybernetics Department, Sobolev
Institute of Mathematics, where he is currently a Leading Researcher. In
2003, he was a Visiting Researcher with Pohang University of Science and
Technology, Korea. His research interest includes subjects related to discrete
mathematics, algebraic combinatorics, coding theory, and graph theory.
\end{IEEEbiographynophoto}

\begin{IEEEbiographynophoto}{Patric R. J. \"Osterg{\aa}rd}
was born in Vaasa, Finland, in 1965. He received
the M.Sc.\ (Tech.) degree in electrical engineering and the D.Sc.\ (Tech.)
degree in computer science and engineering, in 1990 and 1993,
respectively, both from Helsinki University of Technology TKK, Espoo,
Finland.

From 1989 to 2001, he was with the Department of Computer Science and
Engineering at TKK. During 1995--1996, he visited Eindhoven University of
Technology, The Netherlands, and in
2010 he visited Universit\"at Bayreuth, Germany. Since 2000,
he has been a Professor at TKK---which merged with two other universities
into the Aalto University in January 2010---currently in the Department
of Communications and Networking. He was the Head of the Communications
Laboratory, TKK, during 2006--2007. He is the coauthor of
\emph{Classification Algorithms for Codes and Designs} (Berlin: Springer-Verlag, 2006), and,
since 2006, co-Editor-in-Chief of the \emph{Journal
of Combinatorial Designs}. His research interests include algorithms,
coding theory, combinatorics, design theory, information theory, and
optimization.

Dr.\ \"Osterg{\aa}rd is a Fellow of the Institute of Combinatorics and its
Applications. He is a recipient of the 1996 Kirkman Medal and was
awarded an honorary doctorate, Doctor et Professor Honoris Causa, by the
University of P\'ecs, Hungary, in 2013.
\end{IEEEbiographynophoto}

\end{document}